\newtheorem{lemma}{Lemma}
\def\BibTeX{{\rm B\kern-.05em{\sc i\kern-.025em b}\kern-.08em
    T\kern-.1667em\lower.7ex\hbox{E}\kern-.125emX}}
\begin{document}
\title{Channel Estimation for RIS-Aided MU-MIMO mmWave Systems with Direct Channel Links}
\author{Taihao Zhang, Zhendong Peng, Cunhua Pan,~\IEEEmembership{Senior Member,~IEEE}, Hong Ren,~\IEEEmembership{Member,~IEEE}, and Jiangzhou Wang,~\IEEEmembership{Fellow,~IEEE}
	
	\thanks{Taihao Zhang, Cunhua Pan, Hong Ren and  Jiangzhou Wang are with National Mobile Communications Research Laboratory, Southeast University, Nanjing, China (e-mail:{220240854, cpan, hren, j.z.wang}@seu.edu.cn).
		 
		Corresponding author: Cunhua Pan.}
}


\maketitle

\begin{abstract}
In this paper, we propose a three-stage unified channel estimation strategy for reconfigurable intelligent surface (RIS)-aided multi-user (MU) multiple-input multiple-output (MIMO) millimeter wave (mmWave) systems with the existence of the direct channels, where the base station (BS), the users and the RIS are equipped with uniform planar array (UPA). The effectiveness of the developed three-stage strategy stems from the careful design of both the pilot signal sequence of the users and the vectors of RIS. Specifically, in Stage $\mathbf{\uppercase\expandafter{\romannumeral1}}$, the cascaded channel components are eliminated by configuring the RIS phase shift vectors with a $\pi$ difference to estimate the direct channels for all users. The orthogonal subspace projection is employed in Stage $\mathbf{\uppercase\expandafter{\romannumeral2}}$ to obtain equivalent signal matrices, enabling the estimation of angles of departure (AoDs) of the user-RIS channel for all users.
In Stage $\mathbf{\uppercase\expandafter{\romannumeral3}}$, we combine the signals of the time slots with the same pilots and project obtained measurement matrix to the orthogonal complement space of the component consisting of the portion of the direct channel, which removes the direct components and thus prevents error propagation from the direct channels to the cascaded channels. Then, we estimate the angles of arrival (AoAs) of the RIS-BS channel and remaining parameters of the cascaded channel for all users by exploiting the sparsity and correlation in the obtained equivalent matrices. Simulation results demonstrate that the proposed method yields better estimation performance than the existing methods.
\end{abstract}

\begin{IEEEkeywords}
Reconfigurable intelligent surface, channel estimation, MIMO, mmWave.
\end{IEEEkeywords}

\section{Introduction}

\IEEEPARstart{R}{econfigurable} intelligent surface (RIS) is perceived as a promising technology of 6G for improving the spectrum and energy efficiencies \cite{pan,RIS1,RIS2}. An RIS consists of a large number of cost-effective reflective elements \cite{multicell}.
It can effectively mitigate severe path loss and blockage issues to bring the tremendous benefits for millimeter wave (mmWave) multiple-input multiple-output (MIMO) systems. To reap the advantages of RIS, it is crucial to acquire accurate channel state information (CSI) \cite{CSI}. However, in RIS-aided MIMO systems, the users and BS are equipped with multiple antennas, and a large number of passive reflective elements in RIS lack the capacity of signal processing, which causes that numerous parameters need to be estimated \cite{fram}.

There have been contributions on the channel estimation in mmWave system, by exploiting the compressed sensing (CS) \cite{CS}, array processing \cite{aaray} and Bayesian frameworks \cite{bys} due to the fact that the channel is sparse and thus rank-deficient. Motivated by the works on the structured channel models \cite{CSI}, \cite{Peilan,dai,zhou,peng2022channel,peng2023two,anm} researched the channel estimation for RIS-aided mmWave systems. \cite{Peilan} exploited the sparsity to formulate the estimation for the user-RIS-BS cascaded channel as a sparse signal recovery problem. \cite{dai} employed the discrete Fourier transform (DFT) method to estimate the angle of arrival (AoA) for the multi-user (MU) system. A significant reduction in pilot overhead was achieved for MU mmWave systems in \cite{zhou,peng2022channel}.

Nevertheless, the aforementioned channel estimation all assumed that the direct channels are blocked. However, in practice, the cascaded and direct channel may coexist, leading to the respective components coupled in the received signal \cite{zhouguixiu}. It is challenging to achieve simultaneous estimation of the cascaded and direct channel. There are a few studies that considered channel estimation for RIS-aided mmWave systems in the presence of direct channels, which can be categorized into two types based on the behavior of the RIS: ON-OFF method \cite{ONOFF,ONOFF2,ON-OFF3} and Always-ON method \cite{alwayson,taihao,alwayson2,alwayson3}. In the ON-OFF framework, they first turned off the RIS to estimate the direct channels. After reactivating the RIS, \cite{ONOFF2} subtracted the estimated direct component and then proposed two novel algorithms based on the properties of the channel matrix to estimate the cascaded channel, while \cite{ON-OFF3} utilized deep learning-based method to first estimate the composite channel and then estimate the cascaded channel by subtracting the direct channel estimate from the composite channel estimate.
There exist two severe issues for ON-OFF method. First, it is impractical to frequently switch on-off the RIS in each time slot \cite{Phys}, which causes reflection power loss and additional implementation cost. Second, it suffers from error propagation for the estimation error of the direct channel to the cascaded channel estimation, which is called direct-to-cascaded error propagation in this paper for convenience, leading to the degradation in estimation performance. The Always-ON framework avoids frequent turning on/off RIS enabling the simultaneous estimation of the direct and cascaded channel. The methods proposed in \cite{alwayson,taihao,alwayson2} were designed for RIS-aided multiple-input single-output (MISO) systems. \cite{alwayson} avoided switching the RIS by designing phase shifts, but this amplified the noise in the equivalent cascaded signal. To address this issue, \cite{taihao} introduced a method based on projection combined with carefully designed RIS. In \cite{alwayson2}, based on the assumption of a known prior signal distribution, the maximum likelihood estimation (MLE) was used to represent the direct channel estimate, which was then treated as known in the overall estimation problem. \cite{alwayson3} was tailored for multi-user systems. It first suppressed the direct channels to estimate AoAs of RIS-BS channel. This estimated AoA was then used as input to estimate the parameters of the direct channels and the remaining parameters, which achieved joint estimation but also introduced error propagation among the estimated parameters.
These studies have made significant contributions.

Nevertheless, there is a paucity of studies on the channel estimation framework for RIS-aided MU-MIMO systems in the presence of direct channels. The complexity of the signal models and the enormous parameters to be estimated render the simultaneous estimation of the direct and cascaded channels an extremely challenging task.
Against the background, we propose a novel three-stage unified channel estimation framework for RIS-aided MU-MIMO mmWave systems. The major contributions of the paper are summarized as follows.
\begin{itemize}
	\item {We propose a three-stage unified channel estimation framework for the considered RIS-aided MU-MIMO systems. The pilot signal sequence of the users and the vectors of RIS are meticulously designed to achieve separation of the direct and cascaded channels. 
	The developed method has three significant advantages. First, it does not require switching on-off the RIS, thereby reducing the power loss caused by frequent switching. Second, it halves the noise power for the direct channel estimation. Third, it eliminates the direct-to-cascaded error propagation effect and keeps the noise power unchanged for the cascaded channel estimation process.
	}
	
	\item{The proposed channel estimation protocol is carefully designed to achieve two objectives: $(i)$ decouple the direct and cascaded channel so as to avoid direct-to-cascaded error propagation, and $(ii)$ maximize the utilization of elements in the support set required for sparse signal estimation, thereby reducing pilot overhead.
	}

	\item {In Stage $\mathrm{\uppercase\expandafter{\romannumeral1}}$, the components of the cascaded channels are eliminated by combining signals from two time slots with the RIS phase shifts differing by $\pi$ radians. In Stage $\mathrm{\uppercase\expandafter{\romannumeral2}}$, to eliminate the components of the direct channels, signals with identical pilot vectors are stacked and then the orthogonal subspace projection is applied. In Stage $\mathrm{\uppercase\expandafter{\romannumeral3}}$, according to the designed pilots and the vectors of RIS, we first combine the time slot signals with identical pilots, and project them onto the orthogonal complement space of the direct channel components, yielding the equivalent cascaded signals.}

	\item{In order to reduce the pilot overhead and complexity in multi-user scenarios, this work features two key innovations. First, leveraging the commonality of across different users, we combine the time slot signals with the identical equivalent vectors of RIS to obtain the measurement matrix. The UPA-DFT method\footnote{It is defined as an angle estimation technique that exploits the properties of the UPA-type DFT matrix codebook combined with the Kronecker product.} is used to estimate the common AoAs of RIS-BS channel. Second, using the idea of equivalent virtual single antenna conversion (EVSA), which transforms the estimation of MIMO mmWave channels into multiple equivalent estimation of MISO mmWave channels, we employ vectorized OMP to estimate the remaining parameters.}
		 
\end{itemize}

The remainder of this paper is organized as follows. The system model is introduced in Section \ref{systemmodel}. Section \ref{section3} presents the details of the proposed strategy and analyzes the pilot overhead. The simulation results are given in Section \ref{sim}. Finally, Section \ref{conl} provides the conclusions.

\emph{Notations}: Vectors and matrices are denoted by boldface lowercase letters and boldface uppercase letters. The symbols $\mathbf{X}^*$, $\mathbf{X}^{\mathrm{T}}$, $\mathbf{X}^{\mathrm{H}}$ and $\mathbf{X}^{\dagger}$ represent the conjugate, transpose, Hermitian and pseudo-inverse of matrix $\mathbf{X}$. $\left\| \mathbf{X} \right\| _F$ and $\left\| \mathbf{x} \right\| _2$ denote the Frobenius norm of matrix $\mathbf{X}$ and the Euclidean norm of vector $\mathbf{x}$. $\mathrm{Diag}\left\{ \mathbf{x} \right\}$ is a diagonal matrix with the entries of vector $\mathbf{x}$ on its diagonal, while $\mathrm{diag}\left\{ \mathbf{X} \right\}$ represent a vector whose entries are extracted from the diagonal entries of the matrix $\mathbf{X}$. $\mathrm{vec}\left( \mathbf{X} \right)$ denotes the vectorization of $\mathbf{X}$ obtained by stacking the columns of matrix $\mathbf{X}$. $\left[ \mathbf{x} \right] _m$ is the $m$-th element of the vector $\mathbf{x}$, and $\left[ \mathbf{X} \right] _{m,n}$ is the $(m,n)$-th element of the matrix $\mathbf{X}$. The $m$-th row and the $n$-th column of matrix $\mathbf{X}$ are denoted as $\mathbf{X}_{\left( m,: \right)}$ and $\mathbf{X}_{\left( :,n \right)}$, respectively.
The Kronecker product and Khatri-Rao product between two matrices $\mathbf{X}$ and $\mathbf{Y}$ are denoted by $\mathbf{X}\otimes \mathbf{Y}$ and $\mathbf{X}\diamond \mathbf{Y}$. Moreover, $\mathbf{I}_N$ is the $N\times N$ identity matrix. $\mathbf{1}_N$ is the all-one vector of size $N$. $\mathbb{E} \left\{ \cdot \right\}$ is the expectation operation. $\lceil x \rceil$ rounds up to the nearest integer. $\left| \cdot \right|$ represents the modulus of a complex number. For a linear space $\boldsymbol{U}$, $\boldsymbol{U}^{\bot}$ is the orthogonal complement of $\boldsymbol{U}$ and $\mathrm{dim}\left( \boldsymbol{U} \right)$ represent the dimension of $\boldsymbol{U}$.
\section{System Model}\label{systemmodel}

This paper investigates uplink channel for the narrow-band time-division duplex (TDD) RIS-aided multi-user mmWave system, in which $K$ multi-antenna users establish a communication link with a BS equipped with an $N=N_1\times N_2$ antenna UPA, where $N_1$ and $N_2$ are the numbers of vertical and horizontal antennas, respectively. Taking user $k$ $(k\in \mathcal{K} =\left\{ 1,\ldots,K \right\})$ as an example, the user $k$ is equipped with $Q_k=Q_{k_1}\times Q_{k_2}$ antenna UPA ($Q_{k_1}$ vertical and $Q_{k_2}$ horizontal antennas). An RIS with an $M=M_1\times M_2$ passive reflecting elements UPA ($M_1$ vertical and $M_2$ horizontal elements) is deployed to improve communication. We consider quasi-static channels, where each channel remains approximately unchanged with an estimation period.
\begin{figure}
	\centering
	\includegraphics[width=8.5cm]{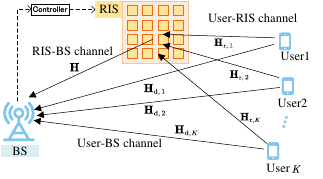}
	\caption{An RIS-aided MU-MIMO mmWave communication system.}
	\label{case} 
\end{figure}

The far-field scenario is considered, i.e., the Rayleigh distance of the MIMO/RIS system is significantly less than the communication distance.
$\mathbf{H}_{\mathrm{d},k}\in \mathbb{C} ^{N\times Q_k}$, $\mathbf{H}_{\mathrm{r},k}\in \mathbb{C} ^{M\times Q_k}$ and  $\mathbf{H}\in \mathbb{C} ^{N\times M}$ is denoted as the channel matrix from user $k$ to the BS, from user $k$ to the RIS, and from the RIS and the BS, respectively. Exploiting the geometric channel model \cite{SV2} typically used for representing the sparsity of mmWave channel, the channel matrices are modeled as follows,
\begin{align}\label{h_d}
\mathbf{H}_{\mathrm{d},k}&=\sum_{i=1}^{I_k}{\zeta _{k,i}\mathbf{a}_N\left( \sigma _{k,i},\xi _{k,i} \right) \mathbf{a}_{Q_k}^{\mathrm{H}}\left( \rho _{k,i},\kappa _{k,i} \right)}\nonumber
\\
&=\mathbf{D}_{N,k}\mathbf{Z}_k\mathbf{D}_{Q,k}^{\mathrm{H}}\,\,  ,\forall k\in \mathcal{K} ,
\end{align}
\begin{align}\label{h_r}
\mathbf{H}_{\mathrm{r},k}&=\sum_{j=1}^{J_k}{\beta _{k,j}\mathbf{a}_M\left( \varphi _{k,j},\theta _{k,j} \right) \mathbf{a}_{Q_k}^{\mathrm{H}}\left( \eta _{k,j},\chi _{k,j} \right)}\nonumber
\\
&=\mathbf{A}_{M,k}\mathbf{B}_k\mathbf{A}_{Q,k}^{\mathrm{H}}\,\, ,\forall k\in \mathcal{K},
\end{align}
\begin{align}\label{H}
\mathbf{H}=\sum_{l=1}^L{\alpha _l\mathbf{a}_N\left( \psi _l,\nu _l \right) \mathbf{a}_{M}^{\mathrm{H}}\left( \omega _l,\mu _l \right)}=\mathbf{A}_N\mathbf{\Lambda A}_{M}^{\mathrm{H}},
\end{align}
where $I_k$, $J_k$ and $L$ are the numbers of propagation paths from user $k$ to the BS, from user $k$ to the RIS and from the RIS to the BS. $\zeta _{k,i}$, $\left( \sigma _{k,i},\xi _{k,i} \right)$ and $\left( \rho _{k,i},\kappa _{k,i} \right)$ denote the complex gain, AoA and AoD of the $i$-th path in the user $k$-BS channel. $\beta _{k,j}$, $\left( \varphi _{k,j},\theta _{k,j} \right)$ and $\left( \eta _{k,j},\chi _{k,j} \right)$ are the complex gain, AoA and AoD of the $j$-th path in the user $k$-RIS channel. $\alpha _l$, $\left( \psi _l,\nu _l \right)$ and $\left( \omega _l,\mu _l \right)$ represent the complex gain, AoA and AoD of the $l$-th path in the RIS-BS channel. Moreover, $\mathbf{D}_{N,k}=\left[ \mathbf{a}_N\left( \sigma _{k,1},\xi _{k,1} \right),\ldots,\mathbf{a}_N\left( \sigma _{k,I_k},\xi _{k,I_k} \right) \right] \in \mathbb{C} ^{N\times I_k}$, $\mathbf{D}_{Q,k}=\left[ \mathbf{a}_{Q_k}\left( \rho _{k,1},\kappa _{k,1} \right),\ldots,\mathbf{a}_{Q_k}\left( \rho _{k,I_k},\kappa _{k,I_k} \right) \right] \in \mathbb{C} ^{Q_k\times I_k}$ and $\mathbf{Z}_k=\mathrm{Diag}\left\{ \zeta _{k,1},\ldots,\zeta _{k,I_k} \right\} \in \mathbb{C} ^{I_k\times I_k}$ are the AoA steering matrix, AoD steering matrix and complex gain matrix of the user $k$-BS channel. $\mathbf{A}_{M,k}=\left[ \mathbf{a}_M\left( \varphi _{k,1},\theta _{k,1} \right) ,\ldots,\mathbf{a}_M\left( \varphi _{k,J_k}\theta _{k,J_k} \right) \right] \in \mathbb{C} ^{M\times J_k}$, $\mathbf{A}_{Q,k}=\left[ \mathbf{a}_{Q_k}\left( \eta _{k,1},\chi _{k,1} \right) ,\ldots,\mathbf{a}_{Q_k}\left( \eta _{k,J_k},\chi _{k,J_k} \right) \right] \in \mathbb{C} ^{Q_k\times J_k}$ and $\mathbf{B}_k=\mathrm{Diag}\left\{ \beta _{k,1},\ldots,\beta _{k,J_k} \right\} \in \mathbb{C} ^{J_k\times J_k}$ denote the AoA steering matrix, AoD steering matrix and complex gain matrix of the user $k$-RIS channel. $\mathbf{A}_N=\left[ \mathbf{a}_N\left( \psi _1,\nu _1 \right) ,\ldots,\mathbf{a}_N\left( \psi _L,\nu _L \right) \right] \in \mathbb{C} ^{N\times L}$, $\mathbf{A}_M=\left[ \mathbf{a}_M\left( \omega _1,\mu _1 \right) ,\ldots,\mathbf{a}_M\left( \omega _L,\mu _L \right) \right] \in \mathbb{C} ^{M\times L}$ and $\mathbf{\Lambda }=\mathrm{Diag}\left\{ \alpha _1,\ldots,\alpha _L \right\} \in \mathbb{C} ^{L\times L}$ are AoA steering matrix, AoD steering matrix and complex gain matrix of RIS-BS channel.

Here, the aforementioned expressions in the form of $\mathbf{a}_F\left( x,y \right) \in \mathbb{C} ^{F\times 1}$ represent the array steering vectors of a UPA with dimension $F_1 \times F_2$ ($F_1$ and $F_2$ denote the numbers of elements in vertical and horizontal direction, respectively), which can be expressed as
$\mathbf{a}_F\left( x,y \right) =\mathbf{a}_{F_1}\left( x \right) \otimes \mathbf{a}_{F_2}\left( y \right)$,
where $\mathbf{a}_{F_1}\left( x \right)$ and $\mathbf{a}_{F_2}\left( y \right)$ are the steering vectors with respect to the vertical and horizontal direction of the UPA, respectively, and have the form as $\mathbf{a}_P\left( u \right) =[ e^{-\mathrm{i}2\pi \mathbf{p}u}] ^{\mathrm{T}}, \mathbf{p}=\left[ 0,\ldots,P-1 \right]$. Furthermore, we denote $x$ and $y$ as equivalent spatial frequencies of the $x$-axis and $y$-axis of the UPA given by $x=\frac{d}{\lambda}\cos \left( \epsilon \right), y=\frac{d}{\lambda}\sin \left( \epsilon \right) \cos \left( \iota  \right)$,
where $\epsilon \in \left[ -\pi /2,\pi /2 \right)$ and $\iota  \in \left[ -\pi ,\pi \right)$ are the signal elevation and azimuth angles of the UPA, respectively. $d$ is the element spacing and $\lambda$ is the carrier wavelength. The collected array adopts half wavelength antenna spacing at the user, BS and RIS, in order to avoid severe side lobes, form a single pair main lobes without phase ambiguity \cite{chenxian}, and mitigate mutual coupling effect of RIS \cite{RISouhe}.

Furthermore, denote $\mathbf{s}_{k,t} \in \mathbb{C} ^{Q_k\times 1}$ and $\mathbf{e}_t\in \mathbb{C} ^{M\times 1}$ as the pilot signal of user $k$ and the vector of the RIS in time slot $t$. The users take the one-by-one transmission scheme\footnote{The one-by-one transmission scheme is equivalent to the orthogonal pilot transmission scheme in terms of pilot overhead \cite{PILOT}. The proposed channel estimation protocol remains applicable to the orthogonal pilot scheme without incurring additional pilot overhead.} to avoid inter-user interference.
	Normalizing transmit power, the received signal from user $k$ in the time slot $t$ is given by
\begin{align}\label{yk}
&\mathbf{y}_{k,t} =\mathbf{H}_{\mathrm{d},k}\mathbf{s}_{k,t} +\mathbf{H}\mathrm{Diag}\left\{ \mathbf{e}_t \right\} \mathbf{H}_{\mathrm{r},k}\mathbf{s}_{k,t} +\mathbf{n}_{k,t}  \nonumber
\\
&\overset{(a)}{=}\mathbf{H}_{\mathrm{d},k}\mathbf{s}_{k,t} +\left( \mathbf{s}_{k,t}^{\mathrm{T}} \otimes \mathbf{I}_N \right) \mathrm{vec}\left( \mathbf{H}\mathrm{Diag}\left\{ \mathbf{e}_t \right\} \mathbf{H}_{\mathrm{r},k} \right)+\mathbf{n}_{k,t}   \nonumber
\\
&\overset{(b)}{=}\mathbf{H}_{\mathrm{d},k}\mathbf{s}_{k,t} +\left( \mathbf{s}_{k,t}^{\mathrm{T}} \otimes \mathbf{I}_N \right) \left( \mathbf{H}_{\mathrm{r},k}^{\mathrm{T}}\diamond \mathbf{H} \right) \mathbf{e}_t+\mathbf{n}_{k,t} \nonumber
\\
&\triangleq\mathbf{H}_{\mathrm{d},k}\mathbf{s}_{k,t} +\left( \mathbf{s}_{k,t}^{\mathrm{T}} \otimes \mathbf{I}_N \right) \mathbf{G}_{k}\mathbf{e}_t+\mathbf{n}_{k,t}  ,
\end{align}
where $\mathbf{n}_{k,t} \sim \mathcal{C} \mathcal{N} \left( 0,\delta ^2\mathbf{I} \right)$ is additive white Gaussian noise (AWGN) in time slot $t$ during user $k$ transmission at the BS. 
Here, ($a$) vectorizes $\mathbf{y}_{k,t}$ and uses $\mathrm{vec}\left( \mathbf{ABC} \right) =\left( \mathbf{C}^{\mathrm{T}}\otimes \mathbf{A} \right) \mathrm{vec}\left( \mathbf{B} \right)$, and ($b$) uses $\mathrm{vec}\left( \mathbf{A}\mathrm{Diag}\left\{ \mathbf{b} \right\} \mathbf{C} \right) =\left( \mathbf{C}^{\mathrm{T}}\diamond \mathbf{A} \right) \mathbf{b}$ \cite{Zhang_2017}.
$\mathbf{G}_{k}\triangleq \mathbf{H}_{\mathrm{r},k}^{\mathrm{T}}\diamond \mathbf{H}\in \mathbb{C} ^{Q_kN\times M}$ is regarded as the cascaded channel for transmission design in the RIS-aided MIMO systems \cite{peng2022channel}, which can be expressed as
\begin{align}\label{G}
\mathbf{G}_{k}&=\mathbf{H}_{\mathrm{r},k}^{\mathrm{T}}\diamond \mathbf{H}
=\left( \mathbf{A}_{M,k}\mathbf{B}_k\mathbf{A}_{Q,k}^{\mathrm{H}} \right) ^{\mathrm{T}}\diamond \left( \mathbf{A}_N\mathbf{\Lambda A}_{M}^{\mathrm{H}} \right) \nonumber
\\
&=\left( \mathbf{A}_{Q,k}^{*}\otimes \mathbf{A}_N \right) \left( \mathbf{B}_k\otimes \mathbf{\Lambda } \right) \left( \mathbf{A}_{M,k}^{\mathrm{T}}\diamond \mathbf{A}_{M}^{\mathrm{H}} \right) ,
\end{align}
where $\left( \mathbf{AC} \right) \diamond \left( \mathbf{BD} \right) =\left( \mathbf{A}\otimes \mathbf{B} \right) \left( \mathbf{C}\diamond \mathbf{D} \right)$ and $\left( \mathbf{AB} \right) \otimes \left( \mathbf{CD} \right) =\left( \mathbf{A}\otimes \mathbf{C} \right) \left( \mathbf{B}\otimes \mathbf{D} \right)$ are used. We develop a novel channel estimation protocol for estimating direct channels $\mathbf{H}_{\mathrm{d},k}$ and cascaded channels $\mathbf{G}_{k}$ in the RIS-aided MU-MIMO mmWave system, which can avoid switching on-off RIS and eliminate direct-to-cascaded error propagation completely.

\section{Unified Channel Estimation For a RIS-aided MU-MIMO System}\label{section3}
\subsection{Channel Estimation Protocol}
\begin{figure*} 
	\centering
	\includegraphics[width=17cm]{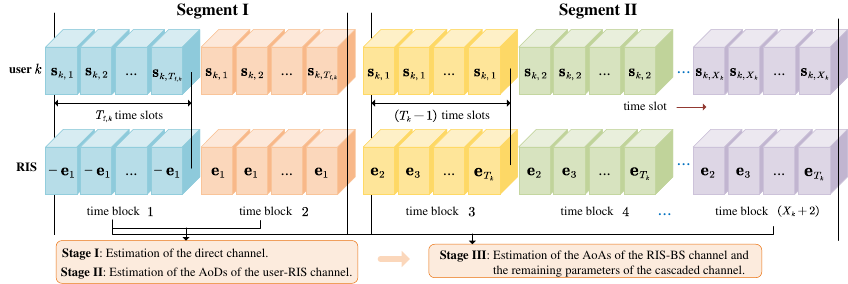}
	\caption{The sequence diagram of the proposed unified channel estimation protocol for the RIS-aided MIMO mmWave system.}
	\label{fig1} 
\end{figure*}
Taking user $k$ as an example, the sequence diagram of the proposed channel estimation protocol is presented in Fig. \ref{fig1}, where the first row represents the pilot signals transmitted by user $k$ and the second row denotes the design of the vectors of RIS. The protocol is divided into two segments along the time dimension: Segment $\mathrm{\uppercase\expandafter{\romannumeral1}}$ corresponds to Stages $\mathrm{\uppercase\expandafter{\romannumeral1}}$ and $\mathrm{\uppercase\expandafter{\romannumeral2}}$, while Segment $\mathrm{\uppercase\expandafter{\romannumeral2}}$ aligns with Stage $\mathrm{\uppercase\expandafter{\romannumeral3}}$.
We deliberately design the pilot transmission schemes, RIS phase shifts, and signal processing methods to  achieve decoupling the direct and cascaded channel components in received signals, while also obtaining adequate independent measurements \cite{omppiot} for enhanced parameter estimation accuracy.

In Segment $\mathrm{\uppercase\expandafter{\romannumeral1}}$, the first two time blocks are designed, each of which includes $T_{\mathrm{f},k}$ time slots. The vectors of RIS within time block $1$ are designed as the opposite value\footnote{The phase shift of each RIS is controlled via a finite-bit quantized structure (typically 1, 2, 3, 4-bit in practice) \cite{cui2014coding}. Under the $N$-bit structure, the opposite RIS values can be achieved by precisely setting element $a$ to $p_a=n\pi /\left( 2^{N-1} \right) , n=0,1,...,2^N-1$ and element $b$ to $p_b=\left( p_a+\pi \right) \mathrm{mod}2\pi$.} to those within time block $2$, while the pilot signals remain identical across both time blocks. Based on the designed RIS phase shift vectors and pilot signals, in Stage $\mathrm{\uppercase\expandafter{\romannumeral1}}$, the equivalent signals is obtained to estimate the direct channels.
In Stage $\mathrm{\uppercase\expandafter{\romannumeral2}}$, with the same time block in this segment, we recombine the received signals and utilize the orthogonal complement space projection to achieve the estimation of the AoDs of the user-RIS channel.

On the other hand, Segment $\mathrm{\uppercase\expandafter{\romannumeral2}}$ consists of $X_k$ time blocks, i.e., from time block $3$ to time block $(X_k+2)$, and each time block consists of $(T_k-1)$ time slots. 
The value of the pilot signals within each time block in Segment $\mathrm{\uppercase\expandafter{\romannumeral2}}$ are designed to be the same as those within the corresponding time slots of the preceding two time blocks. For example, within time block $(x+2)$, the value of pilot symbols is set to the value of pilots in time slot $x$ within time block 1.
Additionally, in Segment $\mathrm{\uppercase\expandafter{\romannumeral2}}$, the vectors of RIS within all $X_k$ time blocks are identically designed as $\left\{\mathbf{e}_2,\mathbf{e}_3,\ldots,\mathbf{e}_{T_k} \right\}$. In Stage $\mathrm{\uppercase\expandafter{\romannumeral3}}$, the AoAs of the RIS-BS channel and the remaining parameters of the cascaded channels can be estimated by utilizing a total of $(X_k+2)$ time blocks from Segment $\mathrm{\uppercase\expandafter{\romannumeral1}}$ and Segment $\mathrm{\uppercase\expandafter{\romannumeral2}}$.
Specifically, by re-organizing the collected $(X_k+2)$ time blocks, a novel signal pre-processing technique known as the orthogonal complement space projection is proposed to obtain the equivalent signal for cascaded channel estimation. It can be observed that the proposed method completely prevents the direct-to-cascaded channel estimation error propagation and thus enhances the estimation performance of the cascaded channel estimation.
For the sake of illustration, in the remaining part of the paper, we denote $\mathbf{n}_{k,t}^{u}$ as the noise vector at the BS in time slot $t$ of time block $u$ during user $k$ transmission.

\subsection{Stage ${\uppercase\expandafter{\romannumeral1}}$: Estimation of the direct channel}\label{stage1}
In Stage $\mathrm{\uppercase\expandafter{\romannumeral1}}$, we utilize the time blocks of Segment $\mathrm{\uppercase\expandafter{\romannumeral1}}$ to estimate all the parameters of the direct channel. We denote the pilot overhead required for estimating the direct channel by $T_{\mathrm{d},k}$, $\left( T_{\mathrm{d},k} \leqslant T_{\mathrm{f},k} \right)$.
The received signal processing procedure for Stage $\mathrm{\uppercase\expandafter{\romannumeral1}}$ is shown in the upper half of Fig. \ref{figfirst}.
\begin{figure}
	\centering
	\includegraphics[width=8.5cm]{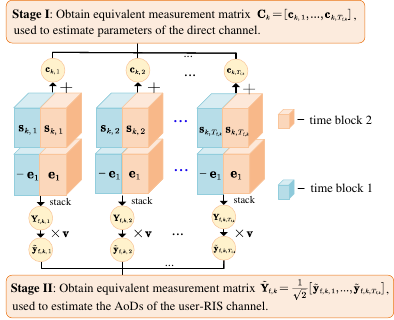}
	\caption{The schematic diagram of signal pre-processing in Stage $\mathrm{\uppercase\expandafter{\romannumeral1}}$ and $\mathrm{\uppercase\expandafter{\romannumeral2}}$.}
	\label{figfirst} 
\end{figure}

Based on \eqref{yk}, we can obtain the received signal at the BS in time slot $t\left( 1\leqslant t\leqslant T_{\mathrm{d},k} \right)$ within time block $1$ and $2$ as
\begin{align}\label{y1k}
\mathbf{y}_{k,t}^{1}=\mathbf{H}_{\mathrm{d},k}\mathbf{s}_{k,t}+\left( \mathbf{s}_{k,t}^{\mathrm{T}}\otimes \mathbf{I}_N \right) \mathbf{G}_{k}\left( -\mathbf{e}_1 \right) +\mathbf{n}_{k,t}^{1} ,
\end{align}
\begin{align}\label{y2k}
\mathbf{y}_{k,t}^{2}=\mathbf{H}_{\mathrm{d},k}\mathbf{s}_{k,t}+\left( \mathbf{s}_{k,t}^{\mathrm{T}}\otimes \mathbf{I}_N \right) \mathbf{G}_{k}\mathbf{e}_1+\mathbf{n}_{k,t}^{2} ,
\end{align}
where $\mathbf{n}_{k,t}^{1} $ and $\mathbf{n}_{k,t}^{2} $ represent the corresponding noise vectors within time block $1$ and $2$, and $\mathbf{s}_{k,t}$ denotes the pilot signals in time slot $t$ within both time block $1$ and time block $2$.  
Identical pilots with the opposite vectors of RIS enable the obtaining of equivalent direct channel signals $\mathbf{c}_{k,t}$ given by,
\begin{align}\label{ct}
\mathbf{c}_{k,t}=\frac{\mathbf{y}_{k,t}^{1}+\mathbf{y}_{k,t}^{2}}{2}=\mathbf{D}_{N,k}\mathbf{Z}_k\mathbf{D}_{Q,k}^{\mathrm{H}}\mathbf{s}_{k,t}+\frac{\mathbf{n}_{k,t}^{1}+\mathbf{n}_{k,t}^{2}}{2}.
\end{align}
The design of the time-varying pilots across slots is engineered to generate independent measurements.
Based on this design, performing the similar steps to \eqref{ct} for all $T_{\mathrm{d},k}$-pairs time slots to obtain the overall $T_{\mathrm{d},k}$ equivalent received signals and stacking them as the equivalent measurement matrix $\mathbf{C}_k=[ \mathbf{c}_{k,1},\ldots,\mathbf{c}_{k,T_{\mathrm{d},k}} ] \in \mathbb{C} ^{N\times T_{\mathrm{d},k}}$, we have
\begin{align}\label{CkH}
\mathbf{C}_{k}^{\mathrm{H}}=\mathbf{S}_{k}^{\mathrm{H}}\mathbf{D}_{Q,k}\mathbf{\Theta }_k+\mathbf{N}_{k}^{\mathrm{H}},
\end{align}
where $\mathbf{\Theta }_k\triangleq \mathbf{Z}_{k}^{\mathrm{H}}\mathbf{D}_{N,k}^{\mathrm{H}}$, $\mathbf{S}_k=[ \mathbf{s}_{k,1},\ldots,\mathbf{s}_{k,T_{\mathrm{d},k}} ] \in \mathbb{C} ^{Q_k\times T_{\mathrm{d},k}}$ is the stacked pilot matrix consisting of $T_{\mathrm{d},k}$ pilot vectors and $\mathbf{N}_k=\left[ \frac{1}{2}\left( \mathbf{n}_{k,t}^{1}+\mathbf{n}_{k,t}^{2} \right) ,\ldots,\frac{1}{2}\left( \mathbf{n}_{k,T_{\mathrm{d},k}}^{1}+\mathbf{n}_{k,T_{\mathrm{d},k}}^{2} \right) \right] \in \mathbb{C} ^{N\times T_{\mathrm{d},k}}$ represents the corresponding noise matrix. To estimate $\mathbf{D}_{Q,k}$, \eqref{CkH} can be re-expressed by using the virtual angular domain (VAD) representation as
\begin{align}\label{CkHVAD}
\mathbf{C}_{k}^{\mathrm{H}}=\mathbf{S}_{k}^{\mathrm{H}}\left( \mathbf{D}_1\otimes \mathbf{D}_2 \right) \tilde{\mathbf{\Theta}}_k+\mathbf{N}_{k}^{\mathrm{H}},
\end{align}
where $\mathbf{D}_1\in \mathbb{C} ^{Q_{k_1}\times B_1}$ and $\mathbf{D}_2\in \mathbb{C} ^{Q_{k_2}\times B_2}$ are overcomplete dictionary matrices ($B_1\geqslant Q_{k_1}, B_2\geqslant Q_{k_2}$). The range of atoms in $\mathbf{D}_1$ and $\mathbf{D}_2$ are $[ -\frac{d}{\lambda _c},( 1-\frac{2}{B_1} ) \frac{d}{\lambda _c} ]$ and $[ -\frac{d}{\lambda _c},( 1-\frac{2}{B_2} ) \frac{d}{\lambda _c} ]$ with the resolution of $2/B_1$ and $2/B_2$, respectively. $\tilde{\mathbf{\Theta}}_k\in \mathbb{C} ^{B_1B_2\times N}$ is a row-sparse matrix with $I_k$ non-zero rows. \eqref{CkHVAD} can be regarded as a sparse signal recovery problem under multiple measurement vector (MMV) case. Employing the simultaneous orthogonal matching pursuit (SOMP) \cite{zd1} algorithm , the AoDs of the user-BS channel can be estimated, i.e., $\left( \rho _{k,i},\kappa _{k,i} \right)$ for $\forall i$. Then, $\rho _{k,i}$ and $\kappa _{k,i}$ are obtained by using the properties of the Kronecker product. Specifically, assuming that the $b$-th row of the matrix $\tilde{\mathbf{\Theta}}_k$ is nonzero, and the $b$-th column of $( \mathbf{D}_1\otimes \mathbf{D}_2 )$ is the steering vector associated with an AoD pair. The indices of elevation and azimuth angles in $ \mathbf{D}_1$ and $ \mathbf{D}_2$ are obtained by
\begin{align}\label{b1b2}
		b_1=\left\lceil {\frac{b}{B_2}} \right\rceil , b_2=b-B_2\left( b_1-1 \right).
\end{align}
With the obtained $\hat{\mathbf{D}}_{Q,k}$, we estimate the remaining parameters of the user-BS channel. Processing the measurement matrix $\mathbf{C}_{k}$ in \eqref{CkH} by right-multiplying $( \hat{\mathbf{D}}_{Q,k}^{\mathrm{H}}\mathbf{S}_k ) ^{\dagger}$, we have
\begin{align}\label{persudo}
\tilde{\mathbf{C}}_k&\triangleq ( \mathbf{D}_{N,k}\mathbf{Z}_k( \hat{\mathbf{D}}_{Q,k}+\Delta \mathbf{D}_{Q,k} ) ^{\mathrm{H}}\mathbf{S}_k+\mathbf{N}_k ) ( \hat{\mathbf{D}}_{Q,k}^{\mathrm{H}}\mathbf{S}_k ) ^{\dagger}\nonumber
\\
&=\mathbf{D}_{N,k}\mathbf{Z}_k+\tilde{\mathbf{N}}_k,
\end{align}
where $\Delta \mathbf{D}_{Q,k}$ represents the estimation error of $\mathbf{D}_{Q,k}$ and $\tilde{\mathbf{N}}_k=(\mathbf{D}_{N,k}\mathbf{Z}_k \Delta \mathbf{D}_{Q,k}^{\mathrm{H}}\mathbf{S}_k+\mathbf{N}_k ) ( \hat{\mathbf{D}}_{Q,k}^{\mathrm{H}}\mathbf{S}_k ) ^{\dagger}$ stands for the equivalent noise matrix. To decouple the remaining parameters, we can obtain the $i$-th column of $\tilde{\mathbf{C}}_k$, which is associated the $i$-th path of user-BS channel, given by
\begin{align}\label{csnlls}
\mathbf{c}_{k,i}=[ \tilde{\mathbf{C}}_k ] _{\left( :,i \right)}=\zeta _{k,i}\mathbf{a}_N\left( \sigma _{k,i},\xi _{k,i} \right) +\tilde{\mathbf{n}}_{k,i},
\end{align}
where $\tilde{\mathbf{n}}_{k,i}$ is the $i$-th column of $\tilde{\mathbf{N}}_k$. Using the SNL-LS method \cite{peng2023two}, we can estimate $\left( \sigma _{k,i},\xi _{k,i} \right)$ via a simple two-dimension search as 
\begin{align}\label{snlls}
( \hat{\sigma}_{k,i},\hat{\xi}_{k,i} ) =\mathrm{arg}\max \,\varpi _{k,i}|\left< \mathbf{c}_{k,i},\mathbf{a}_N\left( \sigma _{k,i},\xi _{k,i} \right) \right> |^2,
\end{align}
where $\varpi _{k,i}=\left< \mathbf{a}_N\left( \sigma _{k,i},\xi _{k,i} \right) ,\mathbf{a}_N\left( \sigma _{k,i},\xi _{k,i} \right) \right> ^{-1}$ and $\sigma _{k,i},\xi _{k,i}\in [ \small{-\frac{d}{\lambda _c},\frac{d}{\lambda _c}} ]$. Once the $\sigma _{k,i}$ and $\xi _{k,i}$ are estimated, the corresponding gains $\zeta _{k,i}$ can be obtained by $\zeta _{k,i}=\mathbf{a}_N\left( \sigma _{k,i},\xi _{k,i} \right) ^{\dagger}\mathbf{c}_{k,i}$. Due to the relatively low pilot overhead required for estimating the direct channels, we repeat $K$ times to obtain the estimate of $\mathbf{H}_{\mathrm{d},k},\forall k\in \mathcal{K}$. The overall estimation for direct channels is summarized in \textbf{Algorithm} \ref{alg1}. 
\begin{algorithm}[t]
	\caption{ Estimation of $\mathbf{H}_{\mathrm{d},k},\forall k\in \mathcal{K}$.}
	\label{alg1}
	\renewcommand{\algorithmicrequire}{\textbf{Input:}}
	\renewcommand{\algorithmicensure}{\textbf{Output:}}
	\begin{algorithmic}[1]
		\REQUIRE $\mathbf{C}_k$, $\mathbf{S}_k$ for $\forall k\in \mathcal{K}$, and $\mathbf{D}_1$, $\mathbf{D}_2$.  
		\FOR{$1\leqslant k\leqslant K$}
			\STATE  \textbf{Phase 1:} Estimate $\mathbf{D}_{Q,k}$ via the SOMP algorithm.
			\STATE Calculate the dictionary $\mathbf{F}=\mathbf{S}_{k}^{\mathrm{H}}\left( \mathbf{D}_1\otimes \mathbf{D}_2 \right)$.
			\STATE Initialize $\mathbf{R}_0=\mathbf{C}_{k}^{\mathrm{H}}$, $\Xi _0=\emptyset$ and $i=1$.
			\REPEAT
			\STATE Calculate $\mathbf{\Psi }=\mathbf{F}^{\mathrm{H}}\mathbf{R}_{i-1}$.
			\STATE $b_i=\,\mathop {\mathrm{arg}\max} \limits_{b_i=1,...,B_1B_2}\mathrm{diag}\{\mathbf{\Psi \Psi }^{\mathrm{H}}\}$.
			\STATE $\Xi _i=\Xi _{i-1}\cup b_i$.
			\STATE Calculate $\mathbf{w}_i=( \mathbf{F}_{( :,\Xi _i )}^{\mathrm{H}}\mathbf{F}_{( :,\Xi _i )} ) ^{-1}\mathbf{F}_{( :,\Xi _i )}^{\mathrm{H}}\mathbf{C}_{k}^{\mathrm{H}}$.
			\STATE $\mathbf{R}_i=\mathbf{C}_{k}^{\mathrm{H}}-\mathbf{F}_{\left( :,\Xi _i \right)}\mathbf{w}_i$.
			\STATE $i=i+1$.
			\UNTIL{$|| \mathbf{R}_{i} || _F\leqslant $ threshold.}
			\STATE Obtain the estimate of $\rho _{k,i}$ and $\kappa _{k,i}$ for $\forall i$ via \eqref{b1b2}.
			\STATE  \textbf{Phase 2:} Estimate the remaining parameters by using the two-dimensional SNL-LS algorithm.
			\STATE Calculate $\tilde{\mathbf{C}}_k=\mathbf{C}_k( \hat{\mathbf{D}}_{Q,k}^{\mathrm{H}}\mathbf{S}_k ) ^{\dagger}$.
			\FOR{$1\leqslant i\leqslant \hat{I}_k$}
			\STATE Extract the $i$-th column of $\tilde{\mathbf{C}}_k$: $\mathbf{c}_{k,i}=[ \tilde{\mathbf{C}}_k ] _{( :,i )}$.
			\STATE Find $( \hat{\sigma}_{k,i},\hat{\xi}_{k,i})$ via \eqref{snlls}.
			\STATE Obtain the channel gains $\hat{\zeta}_{k,i}=\mathbf{a}_N( \hat{\sigma}_{k,i},\hat{\xi}_{k,i}) ^{\dagger}\mathbf{c}_{k,i}$.
			\ENDFOR
			\STATE Obtain the estimates:\\
			$\qquad\hat{\mathbf{D}}_{N,k}=[\mathbf{a}_N(\hat{\sigma}_{k,1},\hat{\xi}_{k,1}),...,\mathbf{a}_N(\hat{\sigma}_{k,\hat{I}_k},\hat{\xi}_{k,\hat{I}_k})]$,\\
			$\qquad\hat{\mathbf{Z}}_k=\mathrm{Diag}\{ \hat{\zeta}_{k,1},\ldots,\hat{\zeta}_{k,\hat{I}_k} \}$.
		\ENDFOR
		\ENSURE $\hat{\mathbf{H}}_{\mathrm{d},k}=\hat{\mathbf{D}}_{N,k}\hat{\mathbf{Z}}_k\hat{\mathbf{D}}_{Q,k}^{\mathrm{H}},\forall k\in \mathcal{K}$.   
	\end{algorithmic}
\end{algorithm}

\subsection{Stage ${\uppercase\expandafter{\romannumeral2}}$: Estimation of the AoDs of the user-RIS channel}\label{juti}
 \textit{1)} The protocol design in Segment $\mathrm{\uppercase\expandafter{\romannumeral1}}$ simultaneously supports estimation in Stage $\mathrm{\uppercase\expandafter{\romannumeral2}}$ for the AoDs of the user-RIS channel, i.e. $\left\{ \left( \eta _{k,j},\chi _{k,j} \right) \right\} _{j=1}^{J_k}$ for $\forall k$.
We denote the pilot overhead required for estimation in Stage $\mathrm{\uppercase\expandafter{\romannumeral2}}$ by $T_{\mathrm{AoD},k}$, $\left( T_{\mathrm{AoD},k} \leqslant T_{\mathrm{f},k} \right)$. Naturally, during the pilot transmission, $T_{\mathrm{f},k}$ is set to $T_{\mathrm{f},k}=\max \left\{ T_{\mathrm{d},k},T_{\mathrm{AoD},k} \right\}$. The received signal processing for Stage $\mathrm{\uppercase\expandafter{\romannumeral2}}$ is shown in the lower half of Fig. \ref{figfirst}.

The designed pilot sequences and RIS phase shifts enable the signal recombination and projection processing.
Specifically, stacking the received signals in time slot $t\left( 1\leqslant t\leqslant T_{\mathrm{AoD},k} \right)$ within time block $1$ and time block $2$ as $\mathbf{Y}_{\mathrm{f},k,t}=[ \mathbf{y}_{k,t}^{1},\mathbf{y}_{k,t}^{2} ]$, we have
\begin{align}
\mathbf{Y}_{\mathrm{f},k,t}=\mathbf{H}_{\mathrm{d},k}\mathbf{s}_{k,t}\mathbf{1}_{2}^{\mathrm{T}}+\left( \mathbf{s}_{k,t}^{\mathrm{T}}\otimes \mathbf{I}_N \right) \mathbf{G}_{k}\mathbf{E}_{\mathrm{f}}+\mathbf{N}_{\mathrm{f},k,t},
\end{align}
where $\mathbf{E}_{\mathrm{f}}=[ -\mathbf{e}_1,\mathbf{e}_1 ]$ and $\mathbf{N}_{\mathrm{f},k,t}=[ \mathbf{n}_{k,t}^{1},\mathbf{n}_{k,t}^{2} ]$. The orthogonal complement space of a two-dimensional vector is the span of its orthogonal vector with a unit length. Thus, we utilize the orthogonal vector of $\mathbf{1}_{2}$, i.e., $\mathbf{v}=[ -\sqrt{2}/2,\sqrt{2}/2 ] ^{\mathrm{T}}$ to obtain $\tilde{\mathbf{y}}_{\mathrm{f},k,t}=\mathbf{Y}_{\mathrm{f},k,t}\mathbf{v}/\sqrt{2}$ given by
\begin{align}\label{yv}
\tilde{\mathbf{y}}_{\mathrm{f},k,t}&=\left( \mathbf{H}_{\mathrm{d},k}\mathbf{s}_{k,t}\mathbf{1}_{2}^{\mathrm{T}}+\left( \mathbf{s}_{k,t}^{\mathrm{T}}\otimes \mathbf{I}_N \right) \mathbf{G}_{k}\mathbf{E}_{\mathrm{f}}+\mathbf{N}_{\mathrm{f},k,t} \right) \mathbf{v}/\sqrt{2}\nonumber
\\
&=\left( \mathbf{s}_{k,t}^{\mathrm{T}}\otimes \mathbf{I}_N \right) \mathbf{G}_{k}\mathbf{e}_1+\tilde{\mathbf{n}}_{\mathrm{f},k,t}\nonumber
\\
&=\mathbf{H}\mathrm{Diag}\left\{ \mathbf{e}_1 \right\} \mathbf{H}_{\mathrm{r},k}\mathbf{s}_{k,t}+\tilde{\mathbf{n}}_{\mathrm{f},k,t},
\end{align}
where we use $\left( \mathbf{C}^{\mathrm{T}}\otimes \mathbf{A} \right) \mathrm{vec}\left( \mathbf{B} \right) =\mathrm{vec}\left( \mathbf{ABC} \right)$ and $\left( \mathbf{C}^{\mathrm{T}}\diamond \mathbf{A} \right) \mathbf{b}=\mathrm{vec}\left( \mathbf{A}\mathrm{Diag}\left\{ \mathbf{b} \right\} \mathbf{C} \right)$ to obtain the third equation, and $\tilde{\mathbf{n}}_{\mathrm{f},k,t}=\small{-\frac{1}{2}}\mathbf{n}_{k,t}^{1}+\small{\frac{1}{2}}\mathbf{n}_{k,t}^{2}$. As observed in \eqref{yv}, the components of the direct channels are completely eliminated in the obtained equivalent signal $\tilde{\mathbf{y}}_{\mathrm{f},k,t}$ for estimating the AoDs of the user-RIS channels, which avoids the error propagation from the direct channel estimation to the estimation of cascaded channels. Furthermore, we note that the equivalent noise power of cascaded channel is reduced to half of its original value, as evident from the third equation in \eqref{yv}.

Applying the same signal pre-processing as \eqref{yv} to all $T_{\mathrm{AoD},k}$ pairs of time slots in the first two blocks and stacking the overall $T_{\mathrm{AoD},k}$ received signals, we obtain $\tilde{\mathbf{Y}}_{\mathrm{f},k}$ as 
\begin{align}\label{YFK}
\tilde{\mathbf{Y}}_{\mathrm{f},k}=\mathbf{H}\mathrm{Diag}\left\{ \mathbf{e}_1 \right\} \mathbf{H}_{\mathrm{r},k}\mathbf{S}_{\mathrm{AoD},k}+\tilde{\mathbf{N}}_{\mathrm{f},k},
\end{align}
where $\mathbf{S}_{\mathrm{AoD},k}=[ \mathbf{s}_{k,1},\ldots,\mathbf{s}_{k,T_{\mathrm{AoD},k}} ]\in \mathbb{C} ^{Q_k\times T_{\mathrm{AoD},k}}$ is regarded as the equivalent pilot matrix transmitted by user $k$ in all $T_{\mathrm{AoD},k}$ time slots illustrated in Fig. \ref{fig1}.
$\tilde{\mathbf{N}}_{\mathrm{f},k}=[ \tilde{\mathbf{n}}_{\mathrm{f},k,1},\ldots,\tilde{\mathbf{n}}_{\mathrm{f},k,T_{\mathrm{AoD},k}} ]\in \mathbb{C} ^{N\times T_{\mathrm{AoD},k}}$ is the corresponding noise matrix. Combining with \eqref{h_r}, we take the conjugate transpose of \eqref{YFK} as
\begin{align}\label{YfkH}
\tilde{\mathbf{Y}}_{\mathrm{f},k}^{\mathrm{H}}=\mathbf{S}_{\mathrm{AoD},k}^{\mathrm{H}}\mathbf{A}_{Q,k}\mathbf{\Phi }_k+\tilde{\mathbf{N}}_{\mathrm{f},k}^{\mathrm{H}},
\end{align}
where $\mathbf{\Phi }_k\triangleq ( \mathbf{A}_N\mathbf{\Lambda A}_{M}^{\mathrm{H}}\mathrm{Diag}\left\{ \mathbf{e}_1 \right\} \mathbf{A}_{M,k}\mathbf{B}_k ) ^{\mathrm{H}}$. Based on the obtained \eqref{YfkH}, we can estimate the AoDs of user-RIS channels, i.e. $\mathbf{A}_{Q,k}$. Due to its resemblance to the problem associated with \eqref{CkH}, the problem associated with \eqref{YfkH} can be transformed into a sparse recovery problem for MMV case as, 
\begin{align}\label{YKHVAD}
\tilde{\mathbf{Y}}_{\mathrm{f},k}^{\mathrm{H}}=\mathbf{S}_{\mathrm{AoD},k}^{\mathrm{H}}\left( \mathbf{D}_1\otimes \mathbf{D}_2 \right) \tilde{\mathbf{\Phi}}_k+\tilde{\mathbf{N}}_{\mathrm{f},k}^{\mathrm{H}},
\end{align}
where $\mathbf{D}_1$ and $\mathbf{D}_2$ are overcomplete dictionary matrices defined in \eqref{CkHVAD} and $\tilde{\mathbf{\Phi}}_k\in \mathbb{C} ^{B_1B_2\times N}$ is a row-sparse matrix with $J_k$ non-zero rows. Similarly, utilizing the SOMP procedure as shown in the Algorithm \ref{alg1}, we can obtain the estimates of $\mathbf{A}_{Q,k}$, denoted by $\hat{\mathbf{A}}_{Q,k}$, and obtain the corresponding estimated AoDs $\left\{ \left( \hat{\eta}_{k,j},\hat{\chi}_{k,j} \right) \right\} _{j=1}^{J_k}$.

In Stage $\mathrm{\uppercase\expandafter{\romannumeral1}}$ and Stage $\mathrm{\uppercase\expandafter{\romannumeral2}}$, the two time blocks of Segment $\mathrm{\uppercase\expandafter{\romannumeral1}}$ are used to estimate the parameters of the user-BS channel and the AoDs of the user-RIS channel with low pilot overhead. Users transmit pilot signals sequentially, enabling the acquisition of all users' parameters by repeating the process $K$ times.

Moreover, in Stage $\mathrm{\uppercase\expandafter{\romannumeral1}}$ and $\mathrm{\uppercase\expandafter{\romannumeral2}}$, we completely separate the direct channel estimation and the cascaded channel estimation in the received signals from the time blocks of Segment $\mathrm{\uppercase\expandafter{\romannumeral1}}$. In Stage $\mathrm{\uppercase\expandafter{\romannumeral3}}$ of the following Section \ref{stage2}, we separate the direct channel estimation and the cascaded channel estimation in the received signals from the time blocks of Segment $\mathrm{\uppercase\expandafter{\romannumeral2}}$ through more flexible signal processing in order to estimate the remaining parameters of the cascaded channel.

\textit{2)} Before illustrating the details of Stage $\mathrm{\uppercase\expandafter{\romannumeral3}}$ of the proposed method, a brief introduction to the existing ON-OFF method is presented, which is known as a classical method for separating the direct and cascaded channel estimation. The mechanism of the ON-OFF method is summarized as follows.

First, switch off the RIS and the equivalent received signal for estimating the direct channels $\mathbf{H}_{\mathrm{d},k}$ is given by
\begin{align}\label{ydk}
	\mathbf{y}_{\mathrm{d},k,t} =\mathbf{H}_{\mathrm{d},k}\mathbf{s}_{k,t} +\mathbf{n}_{\mathrm{d},k},
\end{align}
where $\mathbf{n}_{\mathrm{d},k}$ is the noise vector. Then, switching on the RIS, the equivalent signal $\breve{\mathbf{y}}_k$ for cascaded channel estimation is obtained by subtracting the estimate of $\mathbf{H}_{\mathrm{d},k}$, denoted by $\hat{\mathbf{H}}_{\mathrm{d},k}$, from the received signal. Specifically, $\breve{\mathbf{y}}_k$ is given by 
\begin{align}\label{onoffy}
\breve{\mathbf{y}}_{k,t} &=\left( \Delta \mathbf{H}_{\mathrm{d},k}+\mathbf{H}\mathrm{Diag}\left\{ \mathbf{e}_t \right\} \mathbf{H}_{\mathrm{r},k} \right) \mathbf{s}_{k,t}+\mathbf{n}_{k,t}  \nonumber
\\
&=\mathbf{H}\mathrm{Diag}\left\{ \mathbf{e}_t \right\} \mathbf{H}_{\mathrm{r},k}\mathbf{s}_{k,t} +\breve{\mathbf{n}}_{k,t},
\end{align}
where $\breve{\mathbf{n}}_{k,t} =\mathbf{n}_{k,t} +\Delta \mathbf{H}_{\mathrm{d},k}$, and $\Delta \mathbf{H}_{\mathrm{d},k}\triangleq (\mathbf{H}_{\mathrm{d},k}-\hat{\mathbf{H}}_{\mathrm{d},k})$ represents the inevitable direct-to-cascaded error propagation.
Since $\breve{\mathbf{n}}_k$ and $\Delta \mathbf{H}_{\mathrm{d},k}$ originate from different sources, they can be assumed uncorrelated. Therefore, $Cov\left( \breve{\mathbf{n}}_{k,t} \right) =Cov\left( \mathbf{n}_{k,t} \right) +Cov\left( \Delta \mathbf{H}_{\mathrm{d},k} \right) =\left( \delta ^2+\epsilon \right) \mathbf{I}$, $\epsilon >0$ 
It significantly deteriorates the estimation of the cascaded channel. 
Additionally, in the ON-OFF method, switching the RIS on and off frequently leads to power loss in practice \cite{Phys}.
By contrast, the proposed signal pre-processing method addresses the above-mentioned drawbacks of the ON-OFF method. The details for the proposed method are presented in Section \ref{stage2}.

\subsection{Stage ${\uppercase\expandafter{\romannumeral3}}$: Estimation of the AoAs of the RIS-BS channel and the remaining parameters of the cascaded channel}\label{stage2}


In Stage $\mathrm{\uppercase\expandafter{\romannumeral3}}$, as shown in Fig. \ref{fig1}, we utilize time blocks of both Segment $\mathrm{\uppercase\expandafter{\romannumeral1}}$ and $\mathrm{\uppercase\expandafter{\romannumeral2}}$, i.e., from time block $1$ to time block $\left( X_k+2 \right)$ to estimate the AoAs of the RIS-BS channel and the remaining parameters of the cascaded channel. We propose a novel signal pre-processing method that not only avoids switching on-off the RIS, which is hardware cost-effective, but also completely eradicates the direct-to-cascaded channel estimation error effect for estimating the AoAs of the RIS-BS channel and the remaining parameters of the cascaded channel.

\subsubsection{Signal pre-processing based on orthogonal complement space projection and received signal recombination}
\begin{figure}
	\centering
	\includegraphics[width=8.5cm]{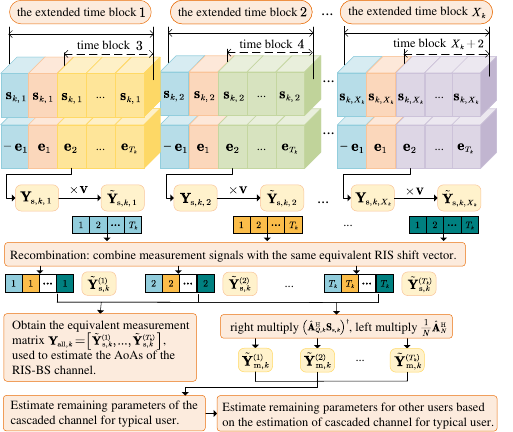}
	\caption{The schematic diagram of signal pre-processing in Stage $\mathrm{\uppercase\expandafter{\romannumeral3}}$.}
	\label{figchongzu} 
\end{figure}
We take the received signal at the BS in time slot $t$, denoted by $\mathbf{y}_{k,t}$, as an example to show the signal pre-processing technique so as to obtain the equivalent received signals for estimating the AoAs of the RIS-BS channel and the remaining parameters of the cascaded channels. Recalling \eqref{yk}, $\mathbf{y}_{k,t}$ is expressed as
\begin{align}\label{ykt}
	\mathbf{y}_{k,t} &=\mathbf{H}_{\mathrm{d},k}\mathbf{s}_{k,t} +\mathbf{H}\mathrm{Diag}\left\{ \mathbf{e}_t \right\} \mathbf{H}_{\mathrm{r},k}\mathbf{s}_{k,t} +\mathbf{n}_{k,t} \nonumber
	\\
	&=\mathbf{H}_{\mathrm{d},k}\mathbf{s}_{k,t} +\left( \mathbf{s}_{k,t}^{\mathrm{T}} \otimes \mathbf{I}_N \right) \mathbf{G}_{k}\mathbf{e}_t+\mathbf{n}_{k,t}.
\end{align}
As depicted in Fig. \ref{figchongzu}, the Segment $\mathrm{\uppercase\expandafter{\romannumeral2}}$ deliberately exploits variations in the vectors of RIS, while maintaining identical pilots within each time block, to generate independent measurements. Specifically, first, we stack the received signals within an extended time block so as to obtain the signal matrices during the given extended time block. The extended time block consists of two time slots from the time blocks of Segment $\mathrm{\uppercase\expandafter{\romannumeral1}}$ and overall $(T_k-1)$ time slots from the subsequent time blocks of Segment $\mathrm{\uppercase\expandafter{\romannumeral2}}$. The index of the selected time slot in the first two time blocks, as well as the index of the subsequent time blocks, both correspond to the index of the extended time block.
For instance, the received signal $\mathbf{y}_{k,t}$ in the time slots of the extended time block $(x_k+2)$ $(1\leqslant x_k\leqslant X_k)$ are stacked and thus the received matrix during the extended time block $(x_k+2)$ is obtained, denoted by
$\mathbf{Y}_{\mathrm{s},k,x_k}\in \mathbb{C} ^{N\times \left( T_k+1 \right)}$ as
\begin{align}\label{YSKX}
\mathbf{Y}_{\mathrm{s},k,x_k}&=\mathbf{H}_{\mathrm{d},k}\mathbf{s}_{k,x_k}\mathbf{1}_{T_k+1}^{\mathrm{T}}+\left( \mathbf{s}_{k,x_k}^{\mathrm{T}}\otimes \mathbf{I}_N \right) \mathbf{G}_{k}\mathbf{E}_{\mathrm{s},k}+\mathbf{N}_{\mathrm{s},k,x_k},
\end{align}
where $\mathbf{E}_{\mathrm{s},k}\triangleq \left[ -\mathbf{e}_1,\mathbf{e}_1,\mathbf{e}_2,\ldots,\mathbf{e}_{T_k} \right]$ is the matrix of RIS and $\mathbf{N}_{\mathrm{s},k,x_k}=[ \mathbf{n}_{k,x_k}^{1},\mathbf{n}_{k,x_k}^{2}, \mathbf{n}_{k,1}^{x_k+2},\ldots,\mathbf{n}_{k,T_k-1}^{x_k+2} ]$ is the noise matrix during the extended time block $(x_k+2)$.

To remove the component of the direct channel in $\mathbf{Y}_{\mathrm{s},k,x_k}$, the following signal pre-processing is performed based on the orthogonal complement space projection.
	Let $\boldsymbol{U}$ denote the space spanned by the vector $\mathbf{1}_{T_k+1}$ in \eqref{YSKX}, which is a subspace of the $(T_k+1)$-dimensional linear space $\mathbb{C} ^{T_k+1}$. $\boldsymbol{U}^{\bot}$ is the orthogonal complement of $\boldsymbol{U}$ satisfying
\begin{align}
	\begin{cases}
		\,\,\boldsymbol{U}^{\bot}=\left\{ \left. \mathbf{f} \right|\mathbf{f}^{\mathrm{H}}\mathbf{g}=0,\forall \mathbf{g}\in \boldsymbol{U} \right\}\\
		\,\,\mathrm{dim}( \boldsymbol{U}) +\mathrm{dim}( \boldsymbol{U}^{\bot} ) =T_k+1.\\
	\end{cases}
\end{align}
Thus, a column-unitary matrix $\mathbf{V}\in \mathbb{C} ^{\left( T_k+1 \right) \times T_k}$ is constructed, whose column vectors are constructed from the subspace $\boldsymbol{U}^{\bot}$.
Now projecting $\mathbf{Y}_{\mathrm{s},k,x_k}$ onto the orthogonal complement space of $\mathbf{1}_{T_k+1}$, we have $\tilde{\mathbf{Y}}_{\mathrm{s},k,x_k}=\mathbf{Y}_{\mathrm{s},k,x_k}\mathbf{V}$ as
\begin{align}\label{YV}
\tilde{\mathbf{Y}}_{\mathrm{s},k,x_k}&=\left( \mathbf{s}_{k,x_k}^{\mathrm{T}}\otimes \mathbf{I}_N \right) \mathbf{G}_{k}\mathbf{E}_{\mathrm{s},k}\mathbf{V}+\mathbf{N}_{\mathrm{s},k,x_k}\mathbf{V}\nonumber
\\
&\triangleq \left( \mathbf{s}_{k,x_k}^{\mathrm{T}}\otimes \mathbf{I}_N \right) \mathbf{G}_{k}\tilde{\mathbf{E}}_{\mathrm{s},k}+\tilde{\mathbf{N}}_{\mathrm{s},k,x_k},
\end{align}
where 
$\tilde{\mathbf{E}}_{\mathrm{s},k}\triangleq \mathbf{E}_{\mathrm{s},k}\mathbf{V}=\left[ \boldsymbol{\phi }_1,\ldots,\boldsymbol{\phi }_{T_k} \right]\in \mathbb{C} ^{M\times T_k}$ is regarded as the equivalent matrix of the RIS and $\tilde{\mathbf{N}}_{\mathrm{s},k,x_k}$ is the corresponding projected noise matrix.
The statistical characteristics of noise remains unchanged due to Lemma \ref{lemma}. 
\begin{lemma}\label{lemma}
	\textit{After the subspace projection, the noise power remains unchanged. Specifically, the elements in $\tilde{\mathbf{N}}_{\mathrm{s},k,x_k}$ are independent and identically distributed with $\mathcal{C} \mathcal{N} \left( 0,\delta ^2 \right)$. }
\end{lemma}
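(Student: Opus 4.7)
The plan is to verify Lemma \ref{lemma} by a direct second-order computation, exploiting two facts: (i) the entries of $\mathbf{N}_{\mathrm{s},k,x_k}$ are i.i.d.\ circularly symmetric complex Gaussian with variance $\delta^2$, and (ii) the matrix $\mathbf{V}$ is column-unitary, i.e., $\mathbf{V}^{\mathrm{H}}\mathbf{V}=\mathbf{I}_{T_k}$, because its columns form an orthonormal basis of the orthogonal complement $\boldsymbol{U}^{\bot}$. Since $\tilde{\mathbf{N}}_{\mathrm{s},k,x_k}=\mathbf{N}_{\mathrm{s},k,x_k}\mathbf{V}$ is a linear transformation of a jointly complex Gaussian matrix, the result is itself jointly complex Gaussian with zero mean, so it suffices to show that its covariance equals $\delta^2\mathbf{I}$ and its pseudo-covariance vanishes.

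First I would vectorize. Using $\mathrm{vec}(\mathbf{ABC})=(\mathbf{C}^{\mathrm{T}}\otimes\mathbf{A})\,\mathrm{vec}(\mathbf{B})$, I get
\begin{equation*}
\mathrm{vec}(\tilde{\mathbf{N}}_{\mathrm{s},k,x_k})=(\mathbf{V}^{\mathrm{T}}\otimes\mathbf{I}_N)\,\mathrm{vec}(\mathbf{N}_{\mathrm{s},k,x_k}).
\end{equation*}
Since $\mathbb{E}\{\mathrm{vec}(\mathbf{N}_{\mathrm{s},k,x_k})\mathrm{vec}(\mathbf{N}_{\mathrm{s},k,x_k})^{\mathrm{H}}\}=\delta^2\mathbf{I}_{N(T_k+1)}$, applying the mixed-product property $(\mathbf{A}\otimes\mathbf{B})(\mathbf{C}\otimes\mathbf{D})=(\mathbf{AC})\otimes(\mathbf{BD})$ gives
\begin{equation*}
\mathbb{E}\bigl\{\mathrm{vec}(\tilde{\mathbf{N}}_{\mathrm{s},k,x_k})\mathrm{vec}(\tilde{\mathbf{N}}_{\mathrm{s},k,x_k})^{\mathrm{H}}\bigr\}=\delta^2\bigl(\mathbf{V}^{\mathrm{T}}\mathbf{V}^{*}\bigr)\otimes\mathbf{I}_N=\delta^2\,\mathbf{I}_{NT_k},
\end{equation*}
where the last equality follows from $\mathbf{V}^{\mathrm{T}}\mathbf{V}^{*}=(\mathbf{V}^{\mathrm{H}}\mathbf{V})^{*}=\mathbf{I}_{T_k}$. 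An analogous calculation on the pseudo-covariance yields $\mathbb{E}\{\mathrm{vec}(\tilde{\mathbf{N}}_{\mathrm{s},k,x_k})\mathrm{vec}(\tilde{\mathbf{N}}_{\mathrm{s},k,x_k})^{\mathrm{T}}\}=\mathbf{0}$, inherited from the circular symmetry of the original AWGN. Together these two moment conditions certify that the entries of $\tilde{\mathbf{N}}_{\mathrm{s},k,x_k}$ are i.i.d.\ $\mathcal{CN}(0,\delta^2)$.

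I do not anticipate a genuine obstacle here: the argument is essentially the standard invariance of the white complex Gaussian distribution under multiplication by an isometry. The only subtlety to flag explicitly is the justification that $\mathbf{V}$ can indeed be chosen column-unitary, which is immediate because $\boldsymbol{U}^{\bot}$ has dimension $T_k$ (by the dimension formula $\dim(\boldsymbol{U})+\dim(\boldsymbol{U}^{\bot})=T_k+1$ with $\dim(\boldsymbol{U})=1$) and any basis of a finite-dimensional inner-product space can be orthonormalized via Gram--Schmidt, so $\mathbf{V}^{\mathrm{H}}\mathbf{V}=\mathbf{I}_{T_k}$ is attainable. Once this is noted, the proof reduces to the two-line Kronecker calculation above.
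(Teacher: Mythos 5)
Your proposal is correct and follows essentially the same route as the paper's proof: vectorize $\tilde{\mathbf{N}}_{\mathrm{s},k,x_k}=\mathbf{N}_{\mathrm{s},k,x_k}\mathbf{V}$, apply the Kronecker mixed-product property, and use $\mathbf{V}^{\mathrm{T}}\mathbf{V}^{*}=\mathbf{I}_{T_k}$ to get covariance $\delta^2\mathbf{I}_{NT_k}$. Your explicit check of the vanishing pseudo-covariance and the Gram--Schmidt justification of column-unitarity are minor additions the paper handles implicitly by invoking circular symmetry and the construction of $\mathbf{V}$ from $\boldsymbol{U}^{\bot}$.
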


\begin{proof}
We have $\mathrm{vec}( \tilde{\mathbf{N}}_{\mathrm{s},k,x_k}) =( \mathbf{V}^{\mathrm{T}}\otimes \mathbf{I}_N ) \mathrm{vec}\left( \mathbf{N}_{\mathrm{s},k,x_k} \right)$, 
	where $\mathrm{vec}( \tilde{\mathbf{N}}_{\mathrm{s},k,x_k})$ still follows the complex Gaussian distribution, since it can be expressed as a linear transformation of the complex Gaussian random vector $\mathrm{vec}(\mathbf{N}_{\mathrm{s},k,x_k}) $.
	Specifically, using the property of circularly-symmetric Gaussian vectors, the covariance matrix of $\mathrm{vec}( \tilde{\mathbf{N}}_{\mathrm{s},k,x_k} )$ is given by
	\begin{align}\label{bu}
		Cov( \mathrm{vec}(\tilde{\mathbf{N}}_{\mathrm{s},k,x_k} ))&=\delta ^2( \mathbf{V}^{\mathrm{T}}\otimes \mathbf{I}_N ) ( \mathbf{V}^*\otimes \mathbf{I}_N )\nonumber
		\\
		&=\delta ^2( \mathbf{V}^{\mathrm{T}}\mathbf{V}^*) \otimes \mathbf{I}_N=\delta ^2\mathbf{I}_{NT_k}.
	\end{align}
	From \eqref{bu}, the variance of each element in $\tilde{\mathbf{N}}_{\mathrm{s},k,x_k}$ is still $\delta ^2$, hence $\mathrm{vec}( \tilde{\mathbf{N}}_{\mathrm{s},k,x_k}) \sim \mathcal{C} \mathcal{N} \left( 0,\delta ^2\mathbf{I}_{NT_k} \right)$. Thus, the noise power remains unchanged after projection.
\end{proof}

Following the derivation similar to \eqref{yk} and \eqref{yv}, we extract the $w$-th $(1\leqslant w\leqslant T_k)$ column of the matrix $\tilde{\mathbf{Y}}_{\mathrm{s},k,x_k}$ as
\begin{align}\label{Yw}
&[\tilde{\mathbf{Y}}_{\mathrm{s},k,x_k}]_{( :,w )}=( \mathbf{s}_{k,x_k}^{\mathrm{T}}\otimes \mathbf{I}_N ) \mathbf{G}_{k}\boldsymbol{\phi }_w+[\tilde{\mathbf{N}}_{\mathrm{s},k,x_k}]_{\left( :,w \right)}\nonumber
\\
&=\mathbf{H}\mathrm{Diag}\left\{ \boldsymbol{\phi }_w \right\} \mathbf{H}_{\mathrm{r},k}\mathbf{s}_{k,x_k}+[\tilde{\mathbf{N}}_{\mathrm{s},k,x_k}]_{\left( :,w \right)}.
\end{align}
Repeating the same signal pre-processing steps as shown from \eqref{YSKX} to \eqref{Yw} for the overall $X_k$ extended time blocks, i.e., the extended time block 3 to the extended time block $(X_k+2)$, $\{ [ \tilde{\mathbf{Y}}_{\mathrm{s},k,x_k}] _{\left( :,w \right)} \}$ $(1\leqslant x_k\leqslant X_k)$ can be obtained. Stacking them, $\tilde{\mathbf{Y}}_{\mathrm{s},k}^{\left( w \right)}\in \mathbb{C} ^{N\times X_k}$ is given by,
\begin{align}\label{YWSTACK}
\tilde{\mathbf{Y}}_{\mathrm{s},k}^{\left( w \right)}&=[ [ \tilde{\mathbf{Y}}_{\mathrm{s},k,1} ] _{\left( :,w \right)},[ \tilde{\mathbf{Y}}_{\mathrm{s},k,2} ] _{\left( :,w \right)},\ldots,[ \tilde{\mathbf{Y}}_{\mathrm{s},k,X_k} ] _{\left( :,w \right)}] \nonumber
\\
&=\mathbf{H}\mathrm{Diag}\left\{ \boldsymbol{\phi }_w \right\} \mathbf{H}_{\mathrm{r},k}\mathbf{S}_{\mathrm{s},k}+\tilde{\mathbf{N}}_{\mathrm{s},k}^{\left( w \right)},
\end{align}
where $\mathbf{S}_{\mathrm{s},k}=[ \mathbf{s}_{k,1},\ldots,\mathbf{s}_{k,X_k} ]$ and $\mathbf{N}_{\mathrm{s},k}^{\left( w \right)}=[ [ \tilde{\mathbf{N}}_{\mathrm{s},k,1} ] _{\left( :,w \right)},\ldots,[ \tilde{\mathbf{N}}_{\mathrm{s},k,X_k} ] _{\left( :,w \right)} ]$ are the corresponding pilot matrix and noise matrix of the received signal $\tilde{\mathbf{Y}}_{\mathrm{s},k}^{\left( w \right)}$.

Finally, we have completed the signal pre-processing of the received signal $\mathbf{y}_{k,t}$ in \eqref{ykt} and obtain the equivalent signal matrices $\tilde{\mathbf{Y}}_{\mathrm{s},k}^{\left( w \right)}$ $(1\leqslant w_k\leqslant T_k)$. In the following, we estimate the AoAs of the RIS-BS channel i.e., the AoAs of the RIS-BS channel $\left\{ \left( \psi _l,\nu _l \right) \right\} _{l=1}^{L}$, and the remaining parameters of the cascaded channel, i.e., $\left\{ \left( \omega _l-\varphi _{k,j} \right) \right\} _{j=1}^{J_k},\left\{ \left( \mu _l-\theta _{k,j} \right) \right\} _{j=1}^{J_k},\{ \alpha _{l}^{*}\beta _{k,j}^{*} \} _{j=1}^{J_k}$ for $\forall l\in \left\{ 1,\ldots,L \right\}$, based on the $\tilde{\mathbf{Y}}_{\mathrm{s},k}^{\left( w \right)}$.
\subsubsection{Estimation of the common AoAs of the RIS-BS channel}\label{UPA-DFT}
Combining \eqref{H} with \eqref{YWSTACK}, we have
\begin{align}\label{Ywh}
\tilde{\mathbf{Y}}_{\mathrm{s},k}^{\left( w \right)}=\mathbf{A}_N\mathbf{\Lambda A}_{M}^{\mathrm{H}}\mathrm{Diag}\left\{ \boldsymbol{\phi }_w \right\} \mathbf{H}_{\mathrm{r},k}\mathbf{S}_{\mathrm{s},k}+\tilde{\mathbf{N}}_{\mathrm{s},k}^{\left( w \right)}.
\end{align}
It is noted that the equivalent signal matrix in \eqref{Ywh} bears resemblance to (9) in \cite{peng2022channel}. The AoAs of the RIS-BS channel $\mathbf{A}_N$ can be estimated by using the UPA-DFT method.

The accuracy of estimating $\mathbf{A}_N$ is crucial for the remaining parameters estimation including $\{ ( \omega _l-\varphi _{k,j} ) \} _{j=1}^{J_k}$, $\{ \left( \mu _l-\theta _{k,j} \right) \} _{j=1}^{J_k}$, $\{ \alpha _{l}^{*}\beta _{k,j}^{*} \} _{j=1}^{J_k}$ for $\forall l\in \left\{ 1,\ldots,L \right\}$. The $\mathbf{A}_N$ exists in $\tilde{\mathbf{Y}}_{\mathrm{s},k}^{\left( w \right)}\left( 1\leqslant w\leqslant T_k \right) $ and is shared by all users' cascaded channels. Therefore, to enhance the estimation accuracy, we stack all $\tilde{\mathbf{Y}}_{\mathrm{s},k}^{\left( w \right)}\left( 1\leqslant w\leqslant T_k \right)$ to obtain 
$\mathbf{Y}_{\mathrm{all},k}=[ \tilde{\mathbf{Y}}_{\mathrm{s},k}^{\left( 1 \right)},\ldots,\tilde{\mathbf{Y}}_{\mathrm{s},k}^{\left( T_k \right)}] \in \mathbb{C} ^{N\times X_kT_k}$, and further, stack $\mathbf{Y}_{\mathrm{all},k}$ of all users to get
$\mathbf{Y}_{\mathrm{all}}=\left[ \mathbf{Y}_{\mathrm{all},1},\ldots,\mathbf{Y}_{\mathrm{all},K} \right] \in \mathbb{C} ^{N\times \sum_{k=1}^K{X_kT_k}}$.
Then, the UPA-DFT method is used to obtain the estimates of $\mathbf{A}_N$, denoted by $\hat{\mathbf{A}}_N$ based on $\mathbf{Y}_{all}$. This operation makes full use of a larger amount of information, which reduces the error propagation from $\mathbf{A}_N$ to the estimation of remaining parameters.

Then, based on the estimated $\hat{\mathbf{A}}_{Q,k}$ and $\hat{\mathbf{A}}_N$, we obtain the measurement matrix $\tilde{\mathbf{Y}}_{\mathrm{m},k}^{\left( w \right)}\in \mathbb{C} ^{L\times J_k}$ by processing $\tilde{\mathbf{Y}}_{\mathrm{s},k}^{\left( w \right)}$ as
\begin{align}\label{AYAS}
	\tilde{\mathbf{Y}}_{\mathrm{m},k}^{\left( w \right)}
	&=\frac{1}{N}\hat{\mathbf{A}}_{N}^{\mathrm{H}}\tilde{\mathbf{Y}}_{\mathrm{s},k}^{\left( w \right)}( \hat{\mathbf{A}}_{Q,k}^{\mathrm{H}}\mathbf{S}_{\mathrm{s},k}) ^{\dagger}\nonumber
	\\
	&=\frac{1}{N}\hat{\mathbf{A}}_{N}^{\mathrm{H}}\left[  \right. ( \hat{\mathbf{A}}_N+\Delta \mathbf{A}_N ) \mathbf{\Lambda A}_{M}^{\mathrm{H}}\mathrm{Diag}\left\{ \boldsymbol{\phi }_w \right\} \mathbf{A}_{M,k}\mathbf{B}_k\nonumber
	\\
	&\,\,\,\,\,( \hat{\mathbf{A}}_{Q,k}+\Delta \mathbf{A}_{Q,k}) \mathbf{S}_{\mathrm{s},k}+\tilde{\mathbf{N}}_{\mathrm{s},k}^{\left( w \right)} \left.  \right] ( \hat{\mathbf{A}}_{Q,k}^{\mathrm{H}}\mathbf{S}_{\mathrm{s},k}) ^{\dagger}\nonumber
	\\
	&=\mathbf{\Lambda A}_{M}^{\mathrm{H}}\mathrm{Diag}\left\{ \boldsymbol{\phi }_w \right\} \mathbf{A}_{M,k}\mathbf{B}_k+\tilde{\mathbf{N}}_{\mathrm{m},k}^{\left( w \right)},
\end{align}
where $\Delta \mathbf{A}_N$ and $\Delta \mathbf{A}_{Q,k}$ are the estimation errors of $\mathbf{A}_N$ and $ \mathbf{A}_{Q,k}$, respectively, and $\tilde{\mathbf{N}}_{\mathrm{m},k}^{\left( w \right)}$ is the equivalent noise matrix including noise and the errors from estimating $\mathbf{A}_N$ and $ \mathbf{A}_{Q,k}$.
In the following subsection, with the pre-processed signal matrix $\tilde{\mathbf{Y}}_{\mathrm{m},k}^{\left( w \right)}$ for $\forall k$, we first present the signal pre-processing for the estimation of remaining parameters for the cascaded channels. Then, we provide the estimation of remaining parameters for a typical user, denoted as user 1 for convenience. Finally, we introduce the estimation of remaining parameters for other users, denote as user $k$ for $k \ne 1$, by utilizing the scaling property.
\begin{algorithm}
	\caption{Estimation of the Remaining Parameters of the Cascaded Channel for a Typical User.}
	\label{alg2}
	\renewcommand{\algorithmicrequire}{\textbf{Input:}}
	\renewcommand{\algorithmicensure}{\textbf{Output:}}
	\begin{algorithmic}[1]
		\REQUIRE $\{ \tilde{\mathbf{Y}}_{\mathrm{vir},1,j} \} _{j=1}^{J_1}$ and $\mathbf{A}_1$, $\mathbf{A}_2$. 
		\STATE  \textbf{Phase 1:} Estimation for the virtual user $\left( 1,1 \right)$.
		\STATE $\tilde{\mathbf{y}}_{u}^{\left( 1,1 \right)}=\left[ \tilde{\mathbf{Y}}_{\mathrm{vir},1,j}^{\mathrm{H}} \right] _{\left( :,u \right)}$ according to \eqref{yu}.
		\STATE Calculate the equivalent dictionary $\mathbf{\Upsilon }=\tilde{\mathbf{E}}_{\mathrm{s},1}^{\mathrm{H}}\left( \mathbf{A}_1\otimes \mathbf{A}_2 \right)$.
		\STATE Find $a=\mathop {\mathrm{arg}\max} \limits_{i=1,...,F_1F_2}\| \mathbf{\Upsilon }_{\left( :,i \right)}^{\mathrm{H}}\tilde{\mathbf{y}}_{u}^{\left( 1,1 \right)} \|$.
		\STATE Obtain the estimate $\widehat{\alpha _{u}^{*}\beta _{1,1}^{*}}$.
		\STATE Calculate $a_{F_1}=\lceil \frac{a}{F_2} \rceil ,\,\,a_{F_2}=a-F_2\left( a_{F_1}-1 \right)$.
		\STATE Find the estimates  $\widehat{\omega _u-\varphi _{1,1}}$ and $\widehat{\mu _u-\theta _{1,1}}$ by referring to the indices from dictionary matrices $\mathbf{A}_1$ and $\mathbf{A}_2$.
		\FOR{$1\leqslant l\leqslant L\,\,\left( l\ne u \right)$}
		\STATE Search $\Delta \hat{\omega}_l$ and $\Delta \hat{\mu}_l$ within $[ -2\small{\frac{d}{\lambda _c}},2\small{\frac{d}{\lambda _c}} ] $ as
		\begin{small}
			\begin{align}\label{deltwv}
				\left( \Delta \hat{\omega}_l,\Delta \hat{\mu}_l \right) =\mathrm{arg}\max \small{\frac{| \left< \tilde{\mathbf{y}}_{l}^{( 1,j )},\mathbf{v}( \Delta \omega _l,\Delta \mu _l ) \right> |^2}{| \mathbf{v }( \Delta \omega _l,\Delta \mu _l ) |^2}},
			\end{align}
		\end{small}
		\STATE Calculate $\hat{\gamma}_l=\mathbf{v }\left( \Delta \hat{\omega}_l,\Delta \hat{\mu}_l \right) ^{\dagger}\tilde{\mathbf{y}}_{l}^{\left( 1,1 \right)}$.
		\STATE Calculate $\widehat{\omega _l-\varphi _{1,1}}=\widehat{\omega _u-\varphi _{1,1}}+\Delta \hat{\omega}_l$, $\widehat{\mu _l-\theta _{1,1}}=\widehat{\mu _u-\theta _{1,1}}+\Delta \hat{\mu}_l$ and $\widehat{\alpha _{l}^{*}\beta _{1,1}^{*}}=\hat{\gamma}_l\widehat{\alpha _{u}^{*}\beta _{1,1}^{*}}$.
		\ENDFOR
		\STATE Obtain  $\{ ( \widehat{\omega _l-\varphi _{1,1}} ) \} _{l=1}^{L}$, $\{ ( \widehat{\mu _l-\theta _{1,1} }) \} _{l=1}^{L}$, $\{ \widehat{\alpha _{l}^{*}\beta _{1,1}^{*} }\} _{l=1}^{L}$,\\
		$\{ \widehat{\Delta \omega _l} \} _{l=1}^{L}$, $\{ \widehat{\Delta \mu _l }\} _{l=1}^{L}$ and $\left\{ \widehat{\gamma _l} \right\} _{l=1}^{L}$.
		\STATE  \textbf{Phase 2:} Estimation for the virtual user $\left( 1,j \right), \left( j\ne 1 \right)$.
		\FOR{$2\leqslant j\leqslant J_1$}	
		\STATE Find $a_j=\mathop {\mathrm{arg}\max} \limits_{i=1,...,F_1F_2}\| \mathbf{\Upsilon }_{\left( :,i \right)}^{\mathrm{H}}\tilde{\mathbf{y}}_{u}^{\left( 1,j \right)} \|$.
		\STATE  Obtain the estimates $\widehat{\omega _u-\varphi _{1,j}}$, $\widehat{\mu _u-\theta _{1,j}}$ and $\widehat{\alpha _{u}^{*}\beta _{1,j}^{*}}$.
		\STATE Calculate $\widehat{\omega _l-\varphi _{1,j}}=\widehat{\omega _u-\varphi _{1,j}}+\Delta \hat{\omega}_l$, $\widehat{\mu _l-\theta _{1,j}}=\widehat{\mu _u-\theta _{1,j}}+\Delta \hat{\mu}_l$, $\widehat{\alpha _{l}^{*}\beta _{1,j}^{*}}=\hat{\gamma}_l\widehat{\alpha _{u}^{*}\beta _{1,j}^{*}}$ for $1\leqslant l\leqslant L$.
		\ENDFOR
		\STATE Obtain  $\{ ( \widehat{\omega _l-\varphi _{1,j}} ) \} _{l=1}^{L}$, $\{ ( \widehat{\mu _l-\theta _{1,j} }) \} _{l=1}^{L}$, $\{ \widehat{\alpha _{l}^{*}\beta _{1,j}^{*} }\} _{l=1}^{L}$ for $j\in \left\{ 1,\ldots,J_1 \right\}$.
		\ENSURE $\hat{\mathbf{G}}_1$.   
	\end{algorithmic}
\end{algorithm}

\subsubsection{Signal processing based on EVSA for the remaining parameters estimation for cascaded channels}\label{typical}
According to the signal model in \eqref{AYAS}, signal processing is conducted based on the idea of EVSA. Specifically, we convert the channel parameter estimation of a multi-antenna user $k$ with $J_k$ paths to the equivalent estimation of $J_k$ single-antenna virtual users with a single path. This method allows us to consider each virtual user $\left( k,j \right) \,\,\left( k\in \left\{ 1,\ldots,K \right\} ,j\in \left\{ 1,\ldots,J_k \right\} \right)$ individually. Based on this equivalent conversion, we extract the $j$-th $\left( 1\leqslant j\leqslant J_k \right)$ column of $\tilde{\mathbf{Y}}_{\mathrm{m},k}^{\left( w \right)}$ to obtain $\tilde{\mathbf{y}}_{\mathrm{m},k,j}^{\left( w \right)}$ as
\begin{align}
\tilde{\mathbf{y}}_{\mathrm{m},k,j}^{\left( w \right)}
&=\mathbf{\Lambda A}_{M}^{\mathrm{H}}\mathrm{Diag}\left\{ \boldsymbol{\phi }_w \right\} \left[ \mathbf{A}_{M,k} \right] _{\left( :,j \right)}\beta _{k,j}+\tilde{\mathbf{n}}_{\mathrm{m},k,j}^{\left( w \right)}\nonumber
\\
&=\mathbf{\Lambda A}_{M}^{\mathrm{H}}\mathrm{Diag}\{ \left[ \mathbf{A}_{M,k} \right] _{\left( :,j \right)}\beta _{k,j} \} \boldsymbol{\phi }_w+\tilde{\mathbf{n}}_{\mathrm{m},k,j}^{\left( w \right)},
\end{align}
where $\tilde{\mathbf{n}}_{\mathrm{m},k,j}^{\left( w \right)}=[ \tilde{\mathbf{N}}_{\mathrm{m},k}^{\left( w \right)}] _{\left( :,j \right)}$ is the corresponding noise vector.
Then, we stack all $\tilde{\mathbf{y}}_{\mathrm{m},k,j}^{\left( w \right)}\left( 1\leqslant w\leqslant T_k \right)$ to obtain the equivalent measurement matrix $\tilde{\mathbf{Y}}_{\mathrm{vir},k,j}=[\tilde{\mathbf{y}}_{\mathrm{m},k,j}^{\left( 1 \right)},\ldots,\tilde{\mathbf{y}}_{\mathrm{m},k,j}^{\left( T_k \right)} ]\in \mathbb{C} ^{L\times T_k}$ for the virtual user $\left( k,j \right)$ given by
\begin{align}\label{Yvirkj}
\tilde{\mathbf{Y}}_{\mathrm{vir},k,j}
&=\mathbf{\Lambda A}_{M}^{\mathrm{H}}\mathrm{Diag}\{ \left[ \mathbf{A}_{M,k} \right] _{\left( :,j \right)}\beta _{k,j} \} [ \boldsymbol{\phi }_1,\ldots,\boldsymbol{\phi }_{T_k} ] \nonumber
\\
&\,\,\,\,\,\,+[ \tilde{\mathbf{n}}_{\mathrm{m},k,j}^{\left( 1 \right)},\ldots,\tilde{\mathbf{n}}_{\mathrm{m},k,j}^{\left( T_k \right)} ] \nonumber
\\
&=\mathbf{\Lambda A}_{M}^{\mathrm{H}}\mathrm{Diag}\{ \tilde{\mathbf{h}}_{ k,j } \} \tilde{\mathbf{E}}_{\mathrm{s},k}+\tilde{\mathbf{N}}_{\mathrm{vir},k,j},
\end{align}
where $\tilde{\mathbf{E}}_{\mathrm{s},k}$ is defined in \eqref{YV}.  In the third equation, $\tilde{\mathbf{N}}_{\mathrm{vir},k,j}=[ \tilde{\mathbf{n}}_{\mathrm{m},k,j}^{\left( 1 \right)},\ldots,\tilde{\mathbf{n}}_{\mathrm{m},k,j}^{\left( T_k \right)}]$ is the corresponding noise matrix for the virtual user $\left( k,j \right)$. $\tilde{\mathbf{h}}_{k,j}$ in \eqref{Yvirkj} is the corresponding user-RIS channel for the virtual single-antenna user $(k,j)$ given by
\begin{align}
\tilde{\mathbf{h}}_{k,j}&\triangleq \left[ \mathbf{A}_{M,k} \right] _{\left( :,j \right)}\beta _{k,j}
=\beta _{k,j}\mathbf{a}_M\left( \varphi _{k,j},\theta _{k,j} \right).
\end{align}
\subsubsection{Estimation of remaining parameters of the cascaded channel for a typical user}
As introduced at the end of Section \ref{UPA-DFT}, in this part, we analyze the estimation of remaining parameters for a typical user \footnote{To ensure high estimation performance, the user closest to the RIS is chosen as the typical user since the received signal is strongest.}, i.e., user 1. 
We take the conjugate transpose of \eqref{Yvirkj} for user 1 and extract the $u$-th column of $\tilde{\mathbf{Y}}_{\mathrm{vir},1,j}^{\mathrm{H}}$ to obtain $\tilde{\mathbf{y}}_{u}^{\left( 1,j \right)}$ as
\begin{align}\label{yu}
\tilde{\mathbf{y}}_{u}^{\left( 1,j \right)}&=\tilde{\mathbf{E}}_{\mathrm{s},1}^{\mathrm{H}}\mathrm{Diag}\{\tilde{\mathbf{h}}_{1,j}^{*}\}\left[ \mathbf{A}_M\mathbf{\Lambda }^* \right] _{\left( :,u \right)}+[\tilde{\mathbf{N}}_{\mathrm{vir},1,j}^{\mathrm{H}}]_{\left( :,u \right)}\nonumber
\\
&=\tilde{\mathbf{E}}_{\mathrm{s},1}^{\mathrm{H}}\mathbf{w}_u+\tilde{\mathbf{n}}_{u}^{\left( 1,j \right)},
\end{align}
where $u$ is determined by $u=\mathop {\mathrm{arg}\max} \limits_{1\leqslant l\leqslant L}\| [ \tilde{\mathbf{Y}}_{\mathrm{vir},1,j}^{\mathrm{H}} ] _{\left( :,l \right)} \| _{2}^{2}$, $\tilde{\mathbf{n}}_{u}^{\left( 1,j \right)}=[ \tilde{\mathbf{N}}_{\mathrm{vir},1,j}^{\mathrm{H}} ] _{\left( :,u \right)}$ is the corresponding noise vector, and $\mathbf{w}_u$ in the second equation is given by $\mathbf{w}_l=\alpha _{l}^{*}\beta _{1,j}^{*}$ $\mathbf{a}_M\left( \omega _l-\varphi _{1,j},\mu _l-\theta _{1,j} \right) $ with $l=u$. The parameter estimation in \eqref{yu} can be formulated as a 1-sparse signal recovery problem under the single measurement vector (SMV) case as
\begin{align}\label{yuomp}
\tilde{\mathbf{y}}_{u}^{\left( 1,j \right)}=\tilde{\mathbf{E}}_{\mathrm{s},1}^{\mathrm{H}}\left( \mathbf{A}_1\otimes \mathbf{A}_2 \right) \boldsymbol{\tau }+\tilde{\mathbf{n}}_{u}^{\left( 1,j \right)},
\end{align}
where $\mathbf{A}_1\in \mathbb{C} ^{M_1\times F_1}$ and $\mathbf{A}_2\in \mathbb{C} ^{M_2\times F_2}$ are the overcomplete dictionary matrices. The range of atoms are $[ -\small{\frac{d}{\lambda _c},( 2-\small{\frac{4}{F_1}} ) \frac{d}{\lambda _c}}]$ and $[ -\small{\frac{d}{\lambda _c},( 2-\small{\frac{4}{F_1}} ) \frac{d}{\lambda _c}}]$ with resolutions given by $2/F_1$ and $2/F_2$, respectively. $\boldsymbol{\tau }\in \mathbb{C} ^{F_1F_2\times 1}$ is a sparse vector with one non-zero element. Utilizing SOMP of the Algorithm 1, the estimate of $\left( \omega _u-\varphi _{1,j} \right)$, $\left( \mu _u-\theta _{1,j} \right)$ and $\alpha _{u}^{*}\beta _{1,j}^{*}$ can be obtained.

With the estimated $\mathbf{w}_u$, $\left( \omega _u-\varphi _{1,j},\mu _u-\theta _{1,j} \right)$ and $\alpha _{u}^{*}\beta _{1,j}^{*}$, we use the correlation relationship between different paths of the RIS-BS channel $\mathbf{H}$ to estimate the vectors $\mathbf{w}_u$ as well as the corresponding parameters $\left( \omega _l-\varphi _{1,j},\mu _l-\theta _{1,j} \right)$ and $\alpha _{l}^{*}\beta _{1,j}^{*}$ for $l \ne u$. Specifically, the above-mentioned parameters estimation can be converted to the estimation of the correlation factors $\gamma _l \triangleq\alpha _{l}^{*}/\alpha _{u}^{*}$ and $\left( \Delta \omega _l,\Delta \mu _l \right) \triangleq \left( \omega _l-\omega _u,\mu _l-\mu _u \right)$ for $\forall l\ne u$. Now, we extract the $l$-th $\left( l\ne u \right)$ row of the matrix $\tilde{\mathbf{Y}}_{\mathrm{vir},1,j}$ to obtain  $\tilde{\mathbf{y}}_{l}^{\left( 1,j \right)}$ as
\begin{align}\label{yl1j}
\tilde{\mathbf{y}}_{l}^{\left( 1,j \right)}
&=\tilde{\mathbf{E}}_{\mathrm{s},1}^{\mathrm{H}}\mathbf{a}_M\left( \omega _l-\varphi _{1,j},\mu _l-\theta _{1,j} \right) \alpha _{l}^{*}\beta _{1,j}^{*}+\tilde{\mathbf{n}}_{l}^{\left( 1,j \right)}\nonumber
\\
&=\tilde{\mathbf{E}}_{\mathrm{s},1}^{\mathrm{H}}\mathrm{Diag}\left\{ \mathbf{w}_u \right\} \mathbf{a}_M\left( \Delta \omega _l,\Delta \mu _l \right) \gamma _l+\tilde{\mathbf{n}}_{l}^{\left( 1,j \right)}\nonumber
\\
&=\mathbf{v }\left( \Delta \omega _l,\Delta \mu _l \right) \gamma _l+\tilde{\mathbf{n}}_{l}^{\left( 1,j \right)},
\end{align}
where $\mathbf{v }\left( \Delta \omega _l,\Delta \mu _l \right) \triangleq\tilde{\mathbf{E}}_{\mathrm{s},1}^{\mathrm{H}}\mathrm{Diag}\left\{ \mathbf{w}_u \right\} \mathbf{a}_M\left( \Delta \omega _l,\Delta \mu _l \right)$.
Similar to the problem associated with \eqref{csnlls}, the SNL-LS algorithm is used $(L-1)$ times to obtain the estimate of $\left\{ \left( \Delta \omega _l,\Delta \mu _l \right) \right\} _{l=1}^{L}$ and $\left\{ \gamma _l \right\} _{l=1}^{L}$. With the obtained $\left( \omega _u-\varphi _{1,j} \right)$, $\left( \mu _u-\theta _{1,j} \right)$ and $\alpha _{u}^{*}\beta _{1,j}^{*}$, the estimate of $\left\{ \left( \omega _l-\varphi _{1,j} \right) \right\} _{l=1}^{L}$, $\left\{ \left( \mu _l-\theta _{1,j} \right) \right\} _{l=1}^{L}$ and $\left\{ \alpha _{l}^{*}\beta _{1,j}^{*} \right\} _{l=1}^{L}$ can be calculated.

Apparently, by repeating the estimation process from \eqref{yu} to \eqref{yl1j} $J_1$ times, all the parameters of the cascaded channel for virtual user $\left( 1,j \right)$ for $\forall j\in \left\{ 1,\ldots,J_1 \right\}$ can be obtained. 
However, we can adopt the following alternative method. 
Specifically, the estimation of the typical virtual user $\left( 1,1 \right)$ is first considered, and $\{ \left( \omega _l-\varphi _{1,1} \right)\} _{l=1}^{L}$, $\{ \left( \mu _l-\theta _{1,1} \right) \} _{l=1}^{L}$ and $\{ \alpha _{l}^{*}\beta _{1,1}^{*} \} _{l=1}^{L}$ are estimated. Then, the estimates of correlation factors 
	$\{ \bigtriangleup \omega _l=\omega _l-\omega _u,\bigtriangleup \mu _l=\mu _l-\mu _u,\gamma _l=\frac{\alpha _{l}^{*}}{\alpha _{u}^{*}} \} _{l=1}^{L}$ are used to obtain the estimates of parameters for the remaining virtual users $\left( 1,j \right) ,\left( j\ne 1 \right)$, i.e., $\left( \omega _u-\varphi _{1,j}\right)$, $\left(\mu _u-\theta _{1,j} \right)$ and $\alpha _{u}^{*}\beta _{1,j}^{*}$ for $ j\ne 1 $. Finally, the estimate $\hat{\mathbf{G}}_1$ is obtained based on \eqref{G}.
	The remaining parameters estimation of the cascaded channel for the typical user is summarized in \textbf{Algorithm} \ref{alg2}.

\subsubsection{Estimation of the remaining parameters of the cascaded channels for other users}
For other users, the EVSA method is still utilized. Specifically, it involves converting the estimation of remaining multi-antenna users into the estimation of $\sum_{k=2}^K{J_k}$ single-antenna users, each with only one path. Furthermore, based on $\hat{\mathbf{G}}_1$, the substitutes of the matrices $\mathbf{\Lambda}$ and $\mathbf{A}_{M}$ are constructed, denoted by $\mathbf{\Lambda }_{\mathrm{rec}}$ and $\mathbf{A}_{M,\mathrm{rec}}$, respectively, thereby reducing the required pilot overhead for the estimation of the remaining parameters of $\mathbf{G}_k$ for $\forall k\ne 1$.

Specifically, $\mathbf{\Lambda }_{\mathrm{rec}}$ and $\mathbf{A}_{M,\mathrm{rec}}$ are constructed as
\begin{align}\label{lrec}
	\mathbf{\Lambda }_{\mathrm{rec}}\triangleq\left( \alpha _{u}^{*}\beta _{1,1}^{*}\mathrm{Diag}\left\{ \gamma _1,\gamma _2,...,\gamma _L \right\} \right) ^*,
\end{align}
\begin{align}\label{AMrec}
	\mathbf{A}_{M,\mathrm{rec}}
	\triangleq \mathrm{Diag}\left\{ \mathbf{a}_M\left( \omega _u-\varphi _{1,1},\mu _u-\theta _{1,1} \right) \right\} \mathbf{A}_{\Delta}.
\end{align}
In \eqref{lrec}, $\alpha _{u}^{*}\beta _{1,1}^{*}$ and $\left\{ \gamma _l \right\} _{l=1}^{L}$ have been estimated from \eqref{yuomp}. In \eqref{AMrec}, $\left( \omega _u-\varphi _{1,1},\mu _u-\theta _{1,1} \right)$ have been estimated from \eqref{yuomp}, and $\mathbf{A}_{\Delta}\triangleq\left[ \mathbf{a}_M\left( \Delta \omega _1,\Delta \mu _1 \right) ,\ldots,\mathbf{a}_M\left( \Delta \omega _L,\Delta \mu _L \right) \right]$, where $\{ \left( \Delta \omega _l,\Delta \mu _l \right) \} _{l=1}^{L}$ are estimated from \eqref{deltwv}. Then, a substitute of $\tilde{\mathbf{h}}_{k,j}$, denoted by $\tilde{\mathbf{h}}_{\mathrm{rec},k,j}$, is determined, ensuring that the equality holds: $\mathbf{\Lambda A}_{M}^{\mathrm{H}}\mathrm{Diag}\{ \tilde{\mathbf{h}}_{k,j} \} \tilde{\mathbf{E}}_{\mathrm{s},k}=\mathbf{\Lambda }_{\mathrm{rec}}\mathbf{A}_{M,\mathrm{rec}}^{\mathrm{H}}\mathrm{Diag}\{ \tilde{\mathbf{h}}_{\mathrm{rec},k,j} \} \tilde{\mathbf{E}}_{\mathrm{s},k}$. We can derive the $\tilde{\mathbf{h}}_{\mathrm{rec},k,j}$ as
\begin{align}\label{hreckj}
\tilde{\mathbf{h}}_{\mathrm{rec},k,j}&=\mathrm{Diag}\left\{ \mathbf{a}_M\left( -\varphi _{1,1},-\theta _{1,1} \right) \right\} \tilde{\mathbf{h}}_{k,j}/\beta _{1,1}
\nonumber
\\
&=\mathbf{a}_M\left( \varphi _{k,j}-\varphi _{1,1},\theta _{k,j}-\theta _{1,1} \right) \beta _{k,j}/\beta _{1,1}.
\end{align}
Based on the reconstruction, we have $\tilde{\mathbf{Y}}_{\mathrm{rec},k,j}$ for $k \ne 1$ as
\begin{align}
\tilde{\mathbf{Y}}_{\mathrm{rec},k,j}=\mathbf{\Lambda }_{\mathrm{rec}}\mathbf{A}_{M,\mathrm{rec}}^{\mathrm{H}}\mathrm{Diag}\{ \tilde{\mathbf{h}}_{\mathrm{rec},k,j} \} \tilde{\mathbf{E}}_{\mathrm{s},k}+\tilde{\mathbf{N}}_{\mathrm{vir},k,j}.
\end{align}
Vectorizing $\tilde{\mathbf{Y}}_{\mathrm{rec},k,j}$, we can obtain measurement vector $\tilde{\mathbf{x}}_{k,j}\in \mathbb{C} ^{LT_k\times 1}$ of the virtual user $\left( k,j \right) (k \ne 1)$ given by
\begin{align}\label{xkj}
\tilde{\mathbf{x}}_{k,j}
&=\mathrm{vec}( \mathbf{\Lambda }_{\mathrm{rec}}\mathbf{A}_{M,\mathrm{rec}}^{\mathrm{H}}\mathrm{Diag}\{ \tilde{\mathbf{h}}_{\mathrm{rec},k,j} \} \tilde{\mathbf{E}}_{\mathrm{s},k}+\tilde{\mathbf{N}}_{\mathrm{vir},k,j} ) \nonumber
\\
&=( \tilde{\mathbf{E}}_{\mathrm{s},k}^{\mathrm{T}}\diamond ( \mathbf{\Lambda }_{\mathrm{rec}}\mathbf{A}_{M,\mathrm{rec}}^{\mathrm{H}} ) ) \tilde{\mathbf{h}}_{\mathrm{rec},k,j}+\tilde{\mathbf{n}}_{\mathrm{vec},k,j},
\end{align}
where $\tilde{\mathbf{n}}_{\mathrm{vec},k,j}=\mathrm{vec}( \tilde{\mathbf{N}}_{\mathrm{vir},k,j} )$ is the corresponding equivalent noise vector of the virtual user $(k,j)$.
Following the same steps as solving the estimation problem associated with \eqref{yu}, the estimation of parameters in \eqref{xkj} can be obtained by solving the 1-sparse signal recovery problem as
\begin{align}\label{vecxkj}
\tilde{\mathbf{x}}_{k,j}=( \tilde{\mathbf{E}}_{\mathrm{s},k}^{\mathrm{T}}\diamond ( \mathbf{\Lambda }_{\mathrm{rec}}\mathbf{A}_{M,\mathrm{rec}}^{\mathrm{H}} ) ) ( \mathbf{A}_1\otimes \mathbf{A}_2 ) \boldsymbol{\varrho }+\tilde{\mathbf{n}}_{\mathrm{vec},k,j},
\end{align}
where $\mathbf{A}_1$ and $\mathbf{A}_2$ are the overcomplete dictionary matrices defined in \eqref{yuomp}, and $\boldsymbol{\varrho }\in \mathbb{C} ^{F_1F_2\times 1}$ is a spare vector with one non-zero element. By utilizing SOMP algorithm, we can obtain the estimate of $\tilde{\mathbf{h}}_{\mathrm{rec},k,j},\forall j\in \{ 1,\ldots,J_k \}$, and the corresponding channel parameters, i.e., $\left\{ \varphi _{k,j}-\varphi _{1,1} \right\} _{j=1}^{J_k}$, $\{ \theta _{k,j}-\theta _{1,1} \} _{j=1}^{J_k}$ and $\{ \beta _{k,j}/\beta _{1,1}\} _{j=1}^{J_k}$. Thus, the estimate of $\mathbf{G}_k$ $(k \ne 1)$, denoted $ \hat{\mathbf{G}}_k$ $(k \ne 1)$, is obtained.
The remaining parameters estimation of the cascaded channels for other users is summarized in \textbf{Algorithm} \ref{alg3}.

\begin{algorithm}[t]
	\caption{ Estimation of Remaining Parameters of the Cascaded Channels for Other Users}
	\label{alg3}
	\renewcommand{\algorithmicrequire}{\textbf{Input:}}
	\renewcommand{\algorithmicensure}{\textbf{Output:}}
	\begin{algorithmic}[1]
		\REQUIRE The equivalent measurement matrix $\{ \tilde{\mathbf{Y}}_{\mathrm{vir},k,j} \} _{j=1}^{J_k}$ for $k\in \{ 2,\ldots,K \} $, the overcomplete dictionary matrices $\mathbf{A}_1$ and $\mathbf{A}_2$ and the estimated $\{ ( \widehat{\omega _l-\varphi _{1,1} }) \} _{l=1}^{L}$, $\{ ( \widehat{\mu _l-\theta _{1,1}} ) \} _{l=1}^{L}
		$ and $\{ \widehat{\alpha _{l}^{*}\beta _{1,1}^{*}} \} _{l=1}^{L}$. 
		\STATE Construct $\mathbf{\Lambda }_{\mathrm{rec}}$, $\mathbf{A}_{M,\mathrm{rec}}$ based on \eqref{lrec}, \eqref{AMrec}, respectively.
		\STATE Calculate the equivalent dictionary as
		\begin{align}
			\mathbf{\Sigma }_k=( 	\tilde{\mathbf{E}}_{\mathrm{s},k}^{\mathrm{T}}\diamond ( \mathbf{\Lambda }_{\mathrm{rec}}\mathbf{A}_{M,\mathrm{rec}}^{\mathrm{H}} ) ) ( \mathbf{A}_1\otimes \mathbf{A}_2 ).
		\end{align}
		\FOR{$1\leqslant j\leqslant J_k$}	
		\STATE Calculate $\tilde{\mathbf{x}}_{k,j}= \mathrm{vec}( \tilde{\mathbf{Y}}_{\mathrm{rec},k,j}) $.
		\STATE Find $p=\mathop {\mathrm{arg}\max} \limits_{i=1,...,F_1F_2}\| \left( \mathbf{\Sigma }_k \right) _{\left( :,p \right)}^{\mathrm{H}}\tilde{\mathbf{x}}_{k,j} \|$.
		\STATE Obtain $\widehat{\beta _{k,j}/\beta _{1,1}}$, and calculate the index as\\
		\quad\quad\quad\quad	$p_1=\lceil \frac{p}{F_2} \rceil,p_2=p-F_2\left( p_1-1 \right).$
		\STATE Find the estimates $\widehat{\varphi _{k,j}-\varphi _{1,1}}$, $\widehat{\theta _{k,j}-\theta _{1,1}}$ by referring the indices from dictionary matrices $\mathbf{A}_1$ and $\mathbf{A}_2$.
		\STATE Calculate the estimates as 
				$\widehat{\omega _l-\varphi _{k,j}}=\widehat{\omega _l-\varphi _{1,1}}-\widehat{\varphi _{k,j}-\varphi _{1,1}}$, 
				$\widehat{\mu _l-\theta _{k,j}}=\widehat{\mu _l-\theta _{1,1}}-\widehat{\theta _{k,j}-\theta _{1,1}}$, and $\widehat{\alpha _{l}^{*}\beta _{k,j}^{*}}=\widehat{\alpha _{l}^{*}\beta _{1,1}^{*}}\widehat{\beta _{k,j}/\beta _{1,1}}$ for $1\leqslant l\leqslant L$.
		\ENDFOR	
		\STATE Obtain $\{ \left( \omega _l-\varphi _{k,j} \right) \} _{j=1}^{J_k}$, $\{ \left( \mu _l-\theta _{k,j} \right) \} _{j=1}^{J_k}$, $\{ \alpha _{l}^{*}\beta _{k,j}^{*} \} _{j=1}^{J_k}$ for $\forall l\in \left\{ 1,\ldots,L \right\}$ and $\forall k\in \left\{ 2,\ldots,K \right\}$.
		\ENSURE $\hat{\mathbf{G}}_k,k\in \left\{ 2,...,K \right\}$.     
	\end{algorithmic}
\end{algorithm}

\section{Analysis of Pilot Overhead and Complexity}\label{pilotanalysis}
In the section, the pilot overhead and complexity of the proposed method is analyzed. We assume $\left\{ I_k \right\} _{k=1}^{K}=I$, $\left\{ J_k \right\} _{k=1}^{K}=J$, $\left\{ Q_k \right\} _{k=1}^{K}=Q$ and $\left\{ X_k \right\} _{k=1}^{K}=X$. The number of pilot signals have a significant impact on the sparse signal recovery problem.
To solve a $h$-sparse signal with dimensions $n$ recovery problem, the number of measurements $m$ is required to be on the order of $\mathcal{O} \left( h\log \left( n \right) \right)$ \cite{omppiot}. And we solve it by OMP algorithm whose complexity is $\mathcal{O} \left( mnh \right)$ \cite{complexity}. The number of measurements required for Kronecker compressed sensing also conforms to this principle \cite{KCS}.
\subsection{Pilot overhead}
\subsubsection{The pilot overhead analysis for Segment $\uppercase\expandafter{\romannumeral1}$}
In the direct channel estimation, the pilot overhead required for the estimation of the AoDs are analyzed. For the $I$-sparse recovery problem associated with \eqref{CkHVAD}, the dimension of the equivalent sensing matrix $\mathbf{F}=\mathbf{S}_{k}^{\mathrm{H}}\left( \mathbf{D}_1\otimes \mathbf{D}_2 \right)$ is $T_{\mathrm{d}}\times B_1B_2$ where $B_1\geqslant Q_{1}$, $B_2\geqslant Q_{2}$.
Therefore, the pilot overhead required should satisfy $T_{\mathrm{d}}\geqslant \mathcal{O} \left( I\log \left( B_1B_2 \right) \right) \geqslant \mathcal{O} \left( I\log \left( Q_{1}Q_{2} \right) \right) =\mathcal{O} \left( I\log \left( Q \right) \right)$, which exactly satisfies the dimensionality condition for the existence of the right pseudo-inverse of $\hat{\mathbf{D}}_{Q,k}^{\mathrm{H}}\mathbf{S}_k$ in \eqref{persudo}.
	On the other hand, in the estimation of AoDs of the user-RIS channels, following the same rationale as above, the pilot overhead required should satisfy $T_{\mathrm{AoD}}\geqslant \mathcal{O} \left( J\log \left( Q \right) \right)$ to solve the $J$-sparse recovery problem associated with \eqref{YKHVAD}.
	Therefore, for the proposed channel estimation protocol shown in Fig. \ref{fig1}, Segment $\mathrm{\uppercase\expandafter{\romannumeral1}}$ satisfies $T_{\mathrm{f}}=\max \left\{ T_{\mathrm{d}},T_{\mathrm{AoD}} \right\} =\max \log \left( Q \right)\left\{ I ,J \right\}$. 
	In conclusion, for all users, the pilot overhead required in Stage $\mathrm{\uppercase\expandafter{\romannumeral1}}$ and $\mathrm{\uppercase\expandafter{\romannumeral2}}$ satisfies $2KT_{\mathrm{f}}\geqslant \mathcal{O} \left( 2K\max \left( Q \right)\left\{ I\log  ,J \right\} \right) $.

\begin{table}[t] %
	\centering
	\begin{threeparttable}
	\caption{Pilot overhead of different estimation methods}
	\label{tab1}
	\footnotesize
	\setlength{\tabcolsep}{1pt} %
	\renewcommand{\arraystretch}{1.3} %
	\begin{tabular}{c|c} %
		\specialrule{1pt}{0pt}{0pt} %
		\rowcolor{gray!25}\textbf{Methods} & \textbf{Minimum Pilot Overhead}  \\
		\hline
		Proposed & $2K\log \left( Q \right)\max \left\{ I ,J\right\} +J\left( L+K-1 \right) \log \left( M \right) /L-KJ$  \\
		\rowcolor{gray!10}Proposed-O & $2K\log \left( Q \right)\max \left\{ I ,J\right\} +JK\log \left( M \right) -KJ$ \\
		ON-OFF & $IK\log \left( Q \right) +JK\log \left( Q \right) +JK\log \left( M \right)$  \\
		\rowcolor{gray!10}Direct-OMP& $2K\log \left( Q \right)\max \left\{ I ,J\right\} +JLK\log \left( M \right)$ \\
		SBL &  $2K\log \left( Q \right) \max \left\{ I,J \right\} +\gg JLK\log \left( M \right)$\\
		\specialrule{1pt}{0pt}{0pt}
	
	\end{tabular}
	\begin{tablenotes} 
		\item For SBL, the pilot overhead is often significantly larger \cite{sbl}.
	\end{tablenotes}
	\end{threeparttable}
\end{table}
\subsubsection{The pilot overhead analysis for Segment $\uppercase\expandafter{\romannumeral2}$}
The number of $T_k$ and $X_k$ in Fig. \ref{fig1} are analyzed.
For both the typical user and other users, to ensure the existence of right pseudo inverse for $\hat{\mathbf{A}}_{Q,k}^{\mathrm{H}}\mathbf{S}_{\mathrm{s},k}$ in \eqref{AYAS}, the equality $X_k\geqslant J_k$ should hold. The pilots required for the typical user and other users are different.
	For the estimation of the remaining parameters of the typical user, we only need to solve the 1-sparse signal recovery problem as shown in \eqref{yuomp}.
	Therefore, $T_1$ satisfies $T_1 \geqslant \mathcal{O} \left( \log \left( M \right) \right)$. Taking into account the above analyses and the designed protocol in Fig. \ref{fig1}, we conclude that the pilot overhead required to estimate the cascaded channel for the typical user $1$, i.e., the number of time slots from time block 3 to time block $X_k+2$, is $X_1\left( T_1-1 \right) \geqslant \mathcal{O} \left( J_1\left( \log \left( M \right) -1 \right) \right)$.
	Then, the pilot overhead for the other users is considered, i.e., user $k\in \left\{ 2,\ldots,K \right\}$.
	For the estimation of the remaining parameters of other users, the pilot overhead should satisfy $LT_k\geqslant \mathcal{O} \left( \log \left( M \right) \right)$, i.e., $T_k\geqslant \left( \log \left( M \right) /L \right) $ to deal with the 1-sparse recovery problem associated with \eqref{vecxkj}.
	Considering $(K-1)$ users in total, the pilot overhead required for other users is given by $\left( K-1 \right) X_k\left( T_k-1 \right) \geqslant \left( K-1 \right) \mathcal{O} \left( J\left( \log \left( M \right) /L-1 \right) \right)$.
On the other hand, as presented in Fig. \ref{fig1} and Fig. \ref{figchongzu}.
The inequality $T_{\mathrm{f},k}\geqslant X_k,\forall k\in \mathcal{K}$ should hold so that the re-combined signal $\mathbf{Y}_{\mathrm{s},k,x_k}$ in \eqref{YSKX} can be obtained for $1\leqslant x_k\leqslant X_k$. Therefore, for an MU-MIMO system, the minimum required pilot overhead is $2K\log \left( Q \right)\max \left\{ I ,J\right\} +J\left( L+K-1 \right) \log \left( M \right) /L-KJ$. Table \ref{tab1} summarizes the pilot overhead of different methods.
\subsection{Complexity Analysis}
The complexity of Stage $\mathrm{\uppercase\expandafter{\romannumeral1}}$ stems from the $I$-sparse signal recovery problem \eqref{CkHVAD} with dimensions $B_1B_2$ requiring $I\log \left( Q \right)$ measurements, which has the complexity of $\mathcal{O} \left( I^2\log \left( Q \right) B_1B_2 \right)$. And, it also generates from $I$ SNL-LS problems as \eqref{snlls} with complexity of $\mathcal{O} \left( INg_1g_2 \right)$, where $g_1$ and $g_2$ are the number of grid points in $\sigma _{k,i}$ and $\xi _{k,i}$.
In Stage $\mathrm{\uppercase\expandafter{\romannumeral2}}$, it mainly results from the sparse signal recovery problem \eqref{YKHVAD} with the complexity of $\mathcal{O} \left( J^2\log \left( Q \right) B_1B_2 \right)$.
In Stage $\mathrm{\uppercase\expandafter{\romannumeral3}}$, we use the UPA-DFT algorithm to estimate AoAs of the RIS-BS with complexity of $\mathcal{O} \left( NT_k\log \left( N \right) \right)= \mathcal{O} \left( N\log \left( M \right) \log \left( N \right) \right) $.  Through EVSA pre-processing and correlation-factors, low complexity is achieved. For user 1, the complexity of Algorithm \ref{alg2} is mainly from the $J$ sparse signal recovery problems as \eqref{yuomp} with $\mathcal{O} \left( JF_1F_2\log \left( M \right) \right)$, and from $L-1$ SNL-LS problems as \eqref{deltwv} with $\mathcal{O} \left( \left( L-1 \right) \log \left( M \right) d_1d_2 \right)$, where $d_1$ and $d_2$ are the number of grid points in $\Delta \omega _l$ and $\Delta \mu _l$. For other users $k,1\leqslant k\leqslant K$, the complexity arises from solving $J$ problems as \eqref{vecxkj}, which has the complexity of $\mathcal{O} \left( J\log \left( M \right) F_1F_2 \right)$. 
	
\section{Simulation Results}\label{sim}
\begin{figure}[b]
	\centering
	\includegraphics[width=8cm]{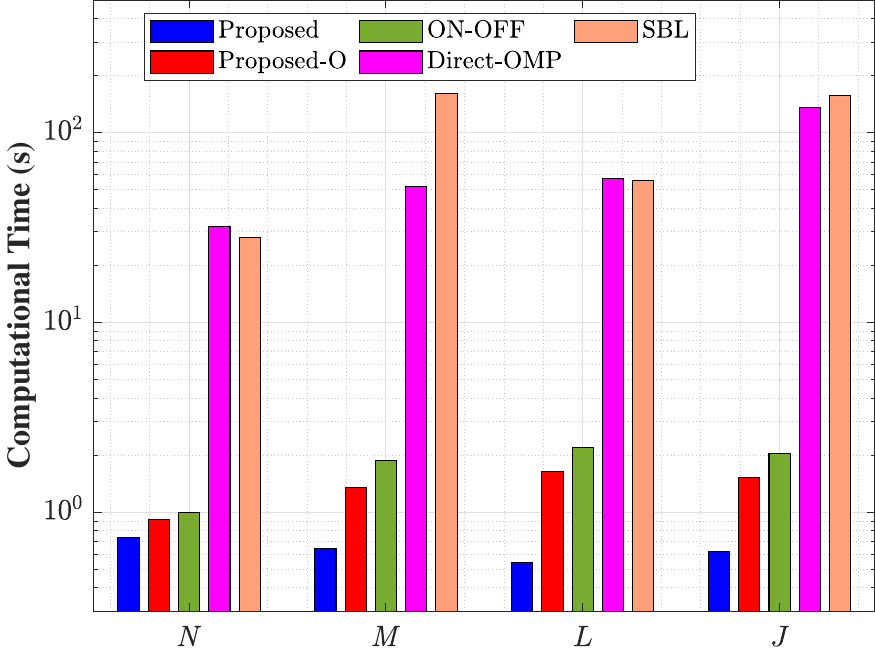}
	\caption{Computational time of the methods under different variables.}
	\label{time} 
\end{figure}

In this section, the simulation results are presented to evaluate the effectiveness of the proposed unified channel estimation protocol. The narrow-band system is considered with $30$ GHz carrier frequency and $100$ MHz transmission bandwidth. The generated channel gains $\zeta _{k,i}$, $\beta _{k,j}$ and $\alpha _l$ follow a complex Gaussian distribution, i.e. $\zeta _{k,i}\sim \mathcal{C} \mathcal{N} \left( 0,10^{-3}D_\mathrm{UB}^{-3.9} \right)$, $\beta _{k,j}\sim \mathcal{C} \mathcal{N} \left( 0,10^{-3}D_\mathrm{UR}^{-2.2} \right)$ and $\alpha _l\sim \mathcal{C} \mathcal{N} \left( 0,10^{-3}D_\mathrm{RB}^{-2.8} \right)$. $D_{\mathrm{UB}}=$ 100 m, $D_{\mathrm{UR}}=$ 40 m and $D_{\mathrm{RB}}=$ 80 m denote the distances between users and the BS, between users and the RIS and between the RIS and the BS.
The signal-to-noise ratio (SNR) is defined as SNR $=10\log \left( 10^{-6}D_{\mathrm{RB}}^{-2.8}D_{\mathrm{UR}}^{-2.2}+10^{-3}D_{\mathrm{UB}}^{-3.75} \right) /\delta ^2$.
Moreover, the parameters $N_1=N_2$, $M_1=M_2$, $K=5$, $Q_{k_1}=Q_{k_2}=6$, $\{ I_k \} _{k=1}^{K}=I$ and $\{ J_k \} _{k=1}^{K}=J$. Additionally, the number of time blocks of Segment $\mathrm{\uppercase\expandafter{\romannumeral2}}$ are set the same value, i.e., $\{ X_k \} _{k=1}^{K}=X$. And, the 1-bit quantized RIS is used, i.e. the phase shift of elements are selected from $\left\{ 0,\pi \right\}$.
In terms of the estimation performance metric, the normalized mean square error (NMSE) is chose, defined as $\mathrm{NMSE}=\mathbb{E} (\sum_{k=1}^K{|| \hat{\mathbf{G}}_{k}-\mathbf{G}_{k} || _{F}^{2}})/\mathbb{E} (\sum_{k=1}^K{|| \mathbf{G}_{k} || _{F}^{2}})$ for the cascaded channel estimation and $\mathrm{NMSE}=\mathbb{E} (\sum_{k=1}^K{|| \hat{\mathbf{H}}_{\mathrm{d},k}-\mathbf{H}_{\mathrm{d},k} || _{F}^{2}})/\mathbb{E} (\sum_{k=1}^K{\left\| \mathbf{H}_{\mathrm{d},k} \right\| _{F}^{2}})$ for the direct channel estimation. We compare the proposed framework with the ON-OFF framework and evaluate different specific estimation algorithms within the proposed framework. The detailed descriptions of each method are as follows:


\begin{itemize}
	\item{Proposed:}
	Utilize the proposed channel estimation protocol, including signal pre-processing, algorithms of the OMP class, and EVSA method.

	\item{ON-OFF:}
	Following the framework described in \eqref{ydk} and \eqref{onoffy} in Section III-C. To ensure a fair comparison, identical specific estimation algorithms provided in the Proposed method are employed.

	\item{Proposed-Ordinary (Proposed-O):}
	Use the proposed channel estimation protocol without EVSA method for other users, namely, by simply repeating the estimation procedure for the typical user $K-1$ times. It can be used as a comparison with the Proposed method to demonstrate the advantage of EVSA method.

	\item{Direct-OMP:}
	Under the proposed framework, the vectorized direct estimation of the cascaded channel from \eqref{AYAS} transforms the estimation problem into $LJ$ sparse recovery problems with a large dictionary matrix.  

	\item{Sparse Bayesian leaning (SBL):}
	Under the proposed framework, the measurement matrix $\tilde{\mathbf{Y}}_{\mathrm{m},k}^{\left( w \right)}$ in \eqref{AYAS} is recovered using SBL. It is executed with the maximum iterations 60 and the convergence threshold $10^{-4}$.
	
	\item{Oracle:}
	It is regarded as the performance upper bound of the proposed method, where the perfect angle information is known by the BS, and the channel gains are estimated by LS algorithm in Stage $\mathrm{\uppercase\expandafter{\romannumeral1}}$ and Stage $\mathrm{\uppercase\expandafter{\romannumeral3}}$.
\end{itemize}

\begin{figure}
	\centering
	\includegraphics[width=7.5cm]{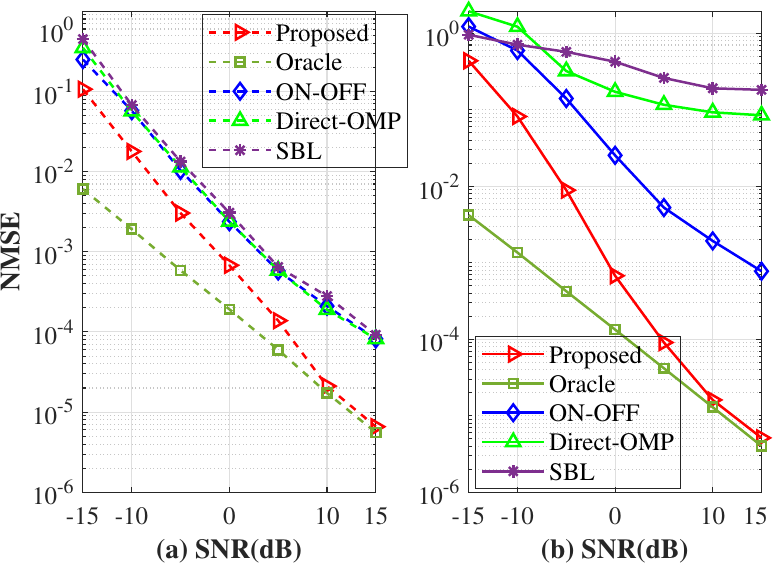}
	\caption{NMSEs versus SNR of the direct (a) and cascaded (b) channel estimation when  $Q_{k_1}(Q_{k_2})=6$, $N_1(N_2)=16$, $M_1(M_2)=16$ and $I=2$, $J=3$, $L=4$, and $T_{\mathrm{f}}=12$, $X=12$, $T=4$.}
	\label{twopara} 
\end{figure}

\begin{figure}
	\centering
	\includegraphics[width=7.5cm]{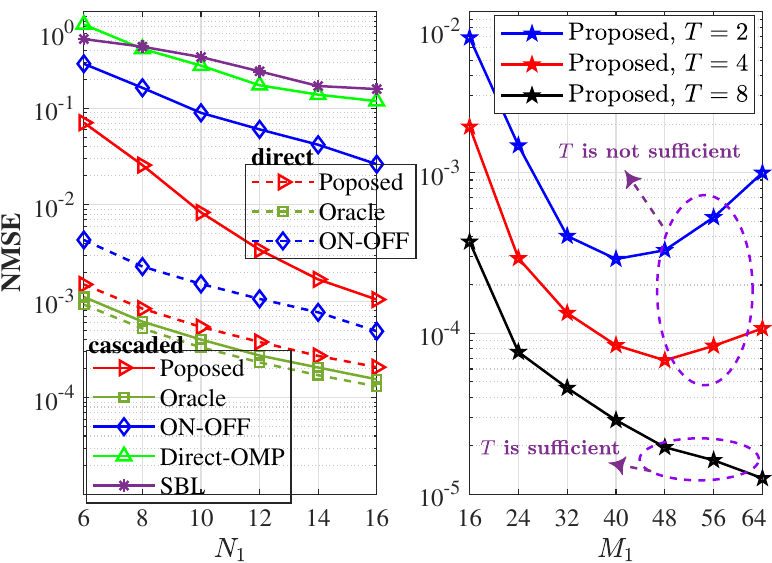}
	\caption{NMSEs versus $N_1$ and $M_1$ when SNR $= 0$ dB and $Q_{k_1}(Q_{k_2})=6$, $I=2$, $J=3$, $L=4$, and $T_{\mathrm{f}}=12$, $X=12$.}
	\label{N} 
\end{figure}

For a direct comparison of computational complexity, Fig. \ref{time} presents the computational time per simulation under varying parameters. The proposed methods exhibit substantially lower computational complexity compared to the other methods.

\begin{figure}
	\centering
	\includegraphics[width=7.5cm]{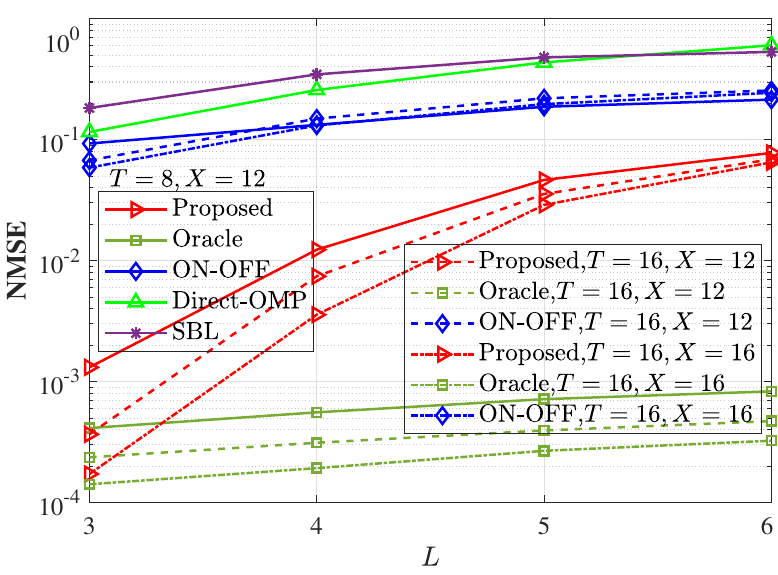}
	\caption{NMSEs versus Number of paths between the RIS and the BS for cascaded channel estimation when SNR $= 0$ dB and $Q_{k_1}(Q_{k_2})=6$, $N_1(N_2)=8$, $M_1(M_2)=16$ and $I=2$, $J=3$.}
	\label{L} 
\end{figure}

\begin{figure}[b]
	\centering
	\includegraphics[width=7.5cm]{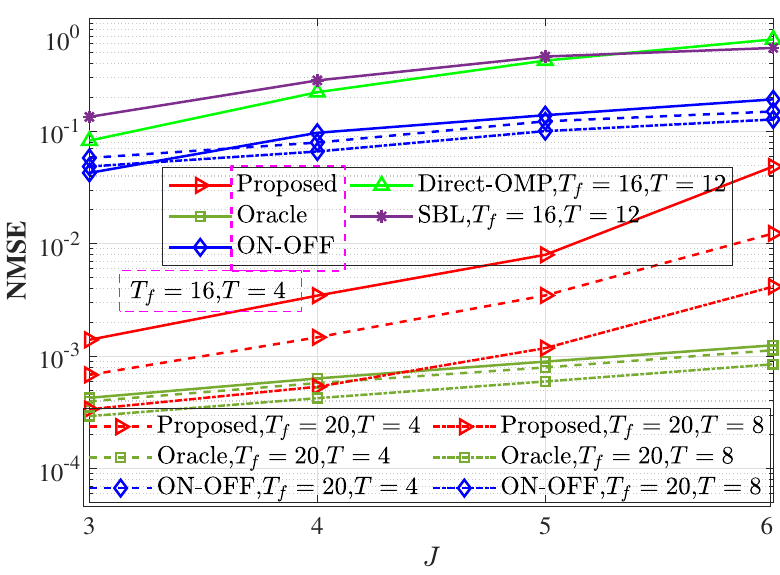}
	\caption{NMSEs vs Number of paths between the user and the RIS for cascaded channel estimation when SNR $= 0$ dB and $Q_{k_1}(Q_{k_2})=6$, $N_1(N_2)=8$, $M_1(M_2)=16$ and $I=2$, $L=4$.}
	\label{J} 
\end{figure}

Due to the flexibility of the pilot design in the proposed three-stage protocol, in simulations, the allocation of $T_{\mathrm{f},k}$ and $T_k$ obey the following rules.
First, in terms of $T_{\mathrm{f},k}$, in the proposed method, it refers to the value of time slots within the first time blocks of Segment $\mathrm{\uppercase\expandafter{\romannumeral1}}$, i.e., the time block 1 and the time block 2 as shown in Fig. \ref{fig1}. While in the ON-OFF method, $T_{\mathrm{f},k}$ time slots are allocated to estimate the direct channel, and another $T_{\mathrm{f},k}$ time slots are allocated to estimate the AoDs of user-RIS channel.
Second, since the value of $T_1$ for the typical user and $T_k$ for other users are different, we consider the average of $T_k$, denoted by $T$, seen as the average time slot per the time block. 
The same setting of $T$ also applies to the ON-OFF method.

Fig. \ref{twopara} depicts the NMSE of the direct and cascaded channels versus SNR. The gap between the proposed method and the Oracle method decreases as SNR increases for the direct and cascaded channel estimation, which implies the angles are estimated perfectly when the SNR exceeds 0 dB. 
At the framework level, the proposed method outperforms the ON-OFF method for both direct and cascaded channel estimation. This superiority stems from the fact that our method halves the noise power for the direct channel estimate and eliminates the direct-to-cascaded error propagation. At the algorithm level, under identical pilot overhead and number of paths, the proposed algorithm achieves superior estimation performance compared to both the Direct-OMP and SBL.

Fig. \ref{N} illustrates the NMSE versus the number of elements at the BS and RIS. Larger $N$ results in the increase in the number of measurements, enhancing the estimation accuracy. Furthermore, we conducted an analysis on the impact of the number of RIS elements on the proposed method. On the one hand, increasing $M$ enhances the received signal power, which is beneficial for estimation performance. However, on the other hand, a larger $M$ requires more pilot overhead, as summarized in Table \ref{tab1}. These two effects exhibit a trade-off. As shown in Fig. \ref{N}, when $M_1$ is less than 40, the first factor dominates. However, when $M_1$ becomes excessively large and pilot resources are insufficient, the second factor dominates, leading to a degradation in estimation performance. Only when the $T$ is sufficient, the estimation performance improves with an increasing $M$.


Fig. \ref{L} shows the impact of the number of paths for RIS-BS channel on the NMSE for the cascaded channel estimation. The number of parameters to be estimated for the RIS-BS channel increases, leading to precision reduction. When the number of paths for the RIS-BS channel is small and the pilot allocation meets the requirement, the performance of the proposed method is very close to that for the Oracle method. Regarding the other methods, increasing $T$ and $X$ does not result in significant performance improvement. When $L$ is large, more pilot overhead paradoxically leads to worse performance. It indicates that the error propagation impact in the ON-OFF framework severely degrades the cascaded channel estimation.
Similarly, Fig. \ref{J} depicts the impact of the number of paths of the user-RIS channel on the NMSE for the cascaded channel estimation.

\begin{figure}[t]
	\centering
	\includegraphics[width=7.5cm]{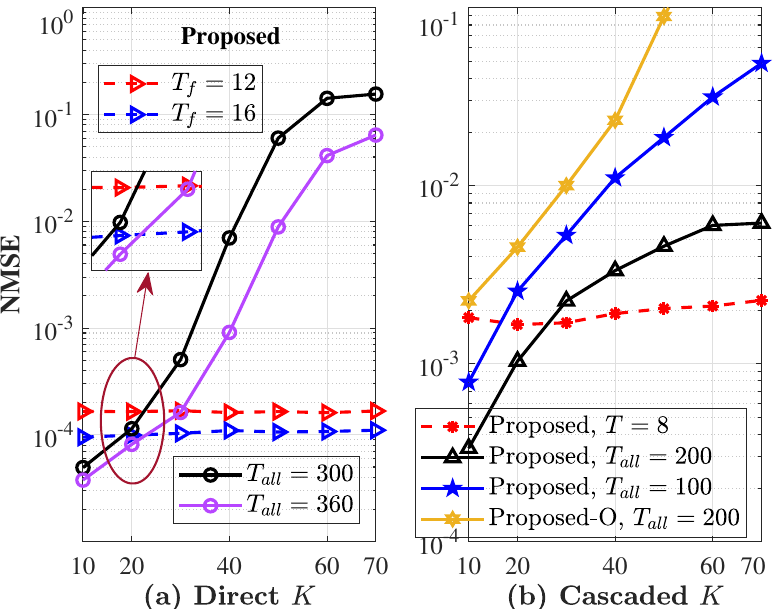}
	\caption{NMSEs vs Number of users for methods when SNR $= 0$ dB, $Q_{k_1}(Q_{k_2})=6$, $N_1(N_2)=16$, $M_1(M_2)=16$, $I=3$, $J=5$ and $L=4$.}
	\label{K} 
\end{figure}
\begin{figure}[t]
	\centering
	\includegraphics[width=7.5cm]{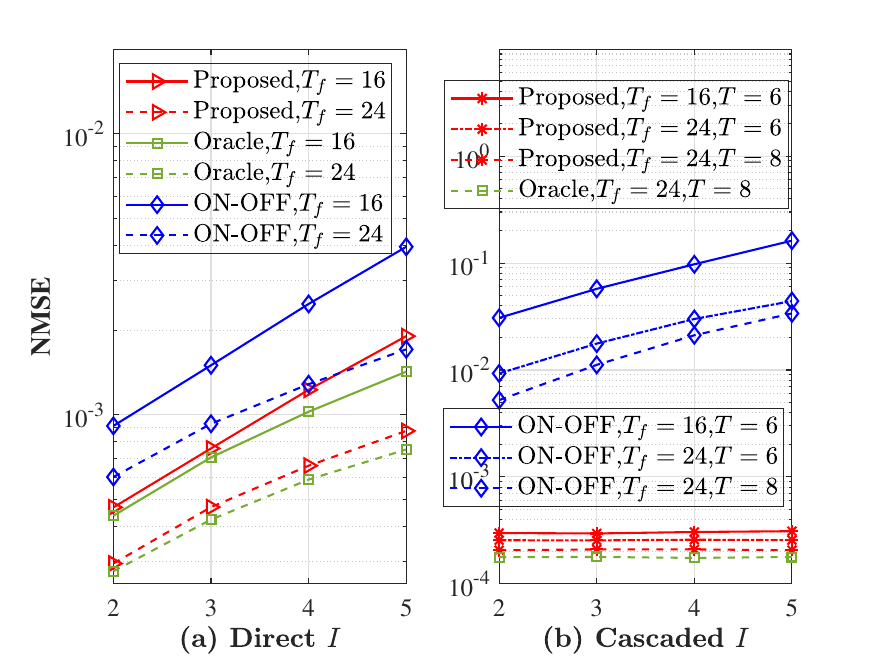}
	\caption{NMSEs vs Number of paths between the user and the BS for direct (a) and cascaded (b) channel estimation when SNR $= 0$ dB and $Q_{k_1}(Q_{k_2})=6$, $N_1(N_2)=8$, $M_1(M_2)=16$ and $J=3$, $L=4$, and $X=14$ in (b).}
	\label{I} 
\end{figure}
The performance of the proposed method in multi-user scenarios is validated in Fig. \ref{K}. We examine both cases with sufficient per-user pilots (dashed lines) and a fixed total pilot overhead (solid lines). From Fig. \ref{K} (a), it can be observed that the proposed method remains effective for the direct channel estimation even with a large number of users. In Fig. \ref{K} (b), we compare the Proposed and Proposed-O method. The results clearly demonstrate the critical role of the EVSA technique in significantly lowering pilot overhead, with its efficiency gain becoming increasingly substantial as more users are served.

Finally, we validate the effectiveness of the proposed framework in eliminating the direct-to-cascaded error propagation.
 Fig. \ref{I} shows the NMSE performances of channel estimation with the number of paths for the user-BS channel for different frameworks. From Fig. \ref{I} (a), the performance of direct channel estimation for all frameworks deteriorates as $I$ grows, due to the increasing number of parameters to be estimated. The performance of the proposed framework approaches that of the Oracle and obviously outperforms the ON-OFF framework. Fig. \ref{I} (b) reveals that the cascaded channel estimation performance of the proposed framework is not affected by $I$, whereas the performance of the ON-OFF framework degrades as $I$ increases. 
As $I$ increases, the error of direct channel estimation becomes more pronounced, degrading the cascaded channel estimation. In contrast, the proposed framework completely separates the estimation of direct and cascaded channels, which eliminates error propagation.

\section{Conclusion}\label{conl}
We investigated the channel estimation for RIS-aided MU-MIMO mmWave systems with the existence of the direct channels, which is a challenging problem that has been rarely studied due to the difficulty of independently estimating direct and cascaded channels from the received signals. 
In this paper, we proposed a novel three-stage unified channel estimation framework, overcoming the limitation of the existing ON-OFF method, which needs the RIS to be turned on-off and suffers from direct-to-cascaded error propagation. By meticulously designing the pilot vectors and the vectors of RIS, the proposed method exploits the $\pi$-differential phase shift and introduces signals recombination and orthogonal complement space of the transmitted pilot sequence to separate the direct and cascaded components from the received signals effectively. Comprehensive simulation results verified that the proposed method can achieve lower pilot overhead and higher accuracy than the existing method.

\bibliographystyle{IEEEtran}
\bibliography{IEEEabrv,reference}


\end{document}